\documentclass[a4paper, 10pt, reqno]{amsart}

\usepackage[utf8]{inputenc}
\usepackage[T1]{fontenc}
\usepackage[largesc]{newtxtext}	
\usepackage{newtxmath}
\usepackage{microtype}
\usepackage{amsmath}
\usepackage{amsthm}

\usepackage{amssymb}
\usepackage{amsfonts}
\usepackage{thmtools}
\usepackage{relsize}
\usepackage[mathscr]{eucal}
\usepackage[linktocpage]{hyperref}
\usepackage[sort,capitalize]{cleveref}
\usepackage{enumitem}			
\usepackage[dvipsnames]{xcolor}
\usepackage[msc-links, abbrev, non-sorted-cites]{amsrefs}
\usepackage{todonotes}
\usepackage{caption}

\SetMathAlphabet{\mathsf}{normal}{OT1}{lmss}{m}{n}
\SetMathAlphabet{\mathsf}{bold}{OT1}{lmss}{bx}{n}

\makeatletter
\def\nonumberfootnote{\xdef\@thefnmark{}\@footnotetext}			
\makeatother

\definecolor{colorred}{HTML}{B00000}
\definecolor{colorgreen}{HTML}{258300}
\definecolor{colorblue}{HTML}{2e32fa}
\definecolor{coloryellow}{HTML}{cbbb1a}
\hypersetup{colorlinks=true, linkcolor=colorred, citecolor=colorgreen, urlcolor=coloryellow}	

\numberwithin{equation}{section}

%
%
%
%
%
%
%
%
%






\newcommand{\rmd}{{\ensuremath{\mathrm{d}}}}
\newcommand{\rme}{{\ensuremath{\mathrm{e}}}}
\newcommand{\rmf}{{\ensuremath{\mathrm{f}}}}

\newcommand{\rmp}{{\ensuremath{\mathrm{p}}}}






\newcommand{\BdP}{{\ensuremath{\mathbf{P}}}}


\newcommand{\sfd}{{\ensuremath{\mathsf{d}}}}


\newcommand{\sfC}{{\ensuremath{\mathsf{C}}}}

\newcommand{\sfX}{{\ensuremath{\mathsf{X}}}}
\newcommand{\sfY}{{\ensuremath{\mathsf{Y}}}}

\newcommand{\scrA}{{\ensuremath{\mathscr{A}}}}
\newcommand{\scrB}{{\ensuremath{\mathscr{B}}}}
\newcommand{\scrC}{{\ensuremath{\mathscr{C}}}}

\newcommand{\scrF}{{\ensuremath{\mathscr{F}}}}






\newcommand{\N}{\boldsymbol{\mathrm{N}}}						
\newcommand{\R}{\boldsymbol{\mathrm{R}}}						
\renewcommand{\d}{\,\mathrm{d}}				


\DeclareMathOperator{\supp}{spt}			

\theoremstyle{definition}
\newtheorem{bump}{Bump}[section]

\theoremstyle{plain}
\newtheorem{theorem}[bump]{Theorem}
\newtheorem{proposition}[bump]{Proposition}
\newtheorem{definition}[bump]{Definition}
\newtheorem{lemma}[bump]{Lemma}
\newtheorem{corollary}[bump]{Corollary}

\theoremstyle{remark}
\newtheorem{remark}[bump]{Remark}
\newtheorem{example}[bump]{Example}

\crefname{theorem}{Theorem}{Theorems}
\crefname{proposition}{Proposition}{Propositions}
\crefname{definition}{Definition}{Definitions}
\crefname{lemma}{Lemma}{Lemmas}
\crefname{corollary}{Corollary}{Corollaries}
\crefname{hypothesis}{Hypothesis}{Hypotheses}
\crefname{remark}{Remark}{Remarks}
\crefname{example}{Example}{Examples}
\crefname{notation}{Notation}{Notations}


\renewenvironment{remark}
  {\begin{oldremark}}
  {\hfill \scalebox{0.7}{$\blacksquare$}\end{oldremark}}

\crefformat{section}{{§}#2#1#3}
\crefformat{subsection}{{§}#2#1#3}
\crefformat{subsubsection}{{§}#2#1#3}
\crefformat{appendix}{{§}#2#1#3}

\crefmultiformat{theorem}{Theorems #2#1#3}{ and #2#1#3}{, #2#1#3}{, and #2#1#3}
\crefmultiformat{proposition}{Propositions #2#1#3}{ and #2#1#3}{, #2#1#3}{, and #2#1#3}
\crefmultiformat{definition}{Definitions #2#1#3}{ and #2#1#3}{, #2#1#3}{, and #2#1#3}
\crefmultiformat{lemma}{Lemmas #2#1#3}{ and #2#1#3}{, #2#1#3}{, and #2#1#3}
\crefmultiformat{corollary}{Corollaries #2#1#3}{ and #2#1#3}{, #2#1#3}{, and #2#1#3}
\crefmultiformat{hypothesis}{Hypotheses #2#1#3}{ and #2#1#3}{, #2#1#3}{, and #2#1#3}
\crefmultiformat{remark}{Remarks #2#1#3}{ and #2#1#3}{, #2#1#3}{, and #2#1#3}
\crefmultiformat{example}{Examples #2#1#3}{ and #2#1#3}{, #2#1#3}{, and #2#1#3}
\crefmultiformat{notation}{Notations #2#1#3}{ and #2#1#3}{, #2#1#3}{, and #2#1#3}

\crefmultiformat{section}{{§§}#2#1#3}{ and #2#1#3}{, #2#1#3}{, and #2#1#3}
\crefmultiformat{subsection}{{§§}#2#1#3}{ and #2#1#3}{, #2#1#3}{, and #2#1#3}
\crefmultiformat{subsubsection}{{§§}#2#1#3}{ and #2#1#3}{, #2#1#3}{, and #2#1#3}

\crefrangeformat{equation}{#3\textcolor{black}{(}#1\textcolor{black}{)}#4 to #5\textcolor{black}{(}#2\textcolor{black}{)}#6}

%
%
%
%
%
%
%
%
%

\newcommand{\mms}{\mathsf{M}}				
\newcommand{\met}{\sfd}						

\newcommand{\meas}{\mathfrak{m}}				

\newcommand{\vol}{\mathrm{vol}}				
\newcommand{\Prob}{\mathscr{P}}		        




\newcommand{\pr}{\mathrm{pr}}				





\newcommand{\push}{\sharp}					


\newcommand{\cl}{\mathrm{cl}}				

%
%
%
%
%
%
%
%
%

\usepackage{blindtext}

%
%
%
%
%
%
%
%
%
\allowdisplaybreaks

\setcounter{tocdepth}{2}

\let\oldtocsection=\tocsection

\let\oldtocsubsection=\tocsubsection

\let\oldtocsubsubsection=\tocsubsubsection

\renewcommand{\tocsection}[2]{\hspace{0em}\oldtocsection{#1}{#2}}
\renewcommand{\tocsubsection}[2]{\hspace{1em}\oldtocsubsection{#1}{#2}}
\renewcommand{\tocsubsubsection}[2]{\hspace{2em}\oldtocsubsubsection{#1}{#2}}

\newcommand{\nocontentsline}[3]{}
\newcommand{\tocless}[2]{\bgroup\let\addcontentsline=\nocontentsline#1{#2}\egroup}

\widowpenalty10000
\clubpenalty10000

%
%
%
%
%
%
%
%
%

\renewcommand{\q}{\mathfrak{q}}

\newcommand{\PPP}{\BdP}

\allowdisplaybreaks

\makeatletter
\@namedef{subjclassname@2020}{\textup{2020} Mathematics Subject Classification}
\makeatother

\begin{document}

\title[Spacetime reconstruction by order and number]{Spacetime reconstruction by order and number}
\author{Mathias Braun}
\address{Institute of Mathematics, EPFL, 1015 Lausanne, Switzerland}
\email{\href{mailto:mathias.braun@epfl.ch}{mathias.braun@epfl.ch}}
\subjclass[2020]{Primary 
51G05;   
51K10; 
53C23;  
Secondary  
60A10;	
60B20;  
60G55;	
83C99.   
}
\keywords{Spacetime reconstruction; Chronological isomorphism; Hawking--King--Malament--McCarthy theory; Isometry; Random matrices; Causal set theory; Conformal change.}
\thanks{Financial support by the EPFL through a Bernoulli Instructorship is gratefully acknowledged. I sincerely thank Fay Dowker for bringing Bombelli's conjecture to my attention at the ``Kick-off Workshop `A new geometry for Einstein’s theory of relativity and beyond''' (University of Vienna, 2025). I thank its organizers Michael Kunzinger, Raquel Perales, Chiara Rigoni, Clemens Sämann, and Roland Steinbauer for the invitation to this inspiring event.}

\begin{abstract} We show that the random adjacency matrices induced by the chronological relations and i.i.d.~samples of two  space\-times coincide in law if and only if the spacetimes in question are smoothly isometric. A similar result holds for weighted spacetimes. In the smooth framework of our article, this relaxes the hypotheses of the recent Gromov reconstruction theorem in Lorentzian signature by Braun--Sämann from a.s.~isometry of the respective time separation functions to a.s.~order isometry. In a probabilistic way, our  result makes a key paradigm of  causal set theory rigorous: spacetime can be recovered by only knowing ``order'' and ``number'' of its points. It confirms a weak version  of Bombelli's conjecture; therefore, it  contributes to recent efforts of formalizing the Haupt\-ver\-mu\-tung (viz.~fundamental conjecture) of causal set theory.
\end{abstract}

\maketitle

\thispagestyle{empty}

\tableofcontents

\addtocontents{toc}{\protect\setcounter{tocdepth}{2}}

\clearpage

\section{Introduction}\label{Ch:Intro}

\subsection{From Hawking--King--Malament--McCarthy theory...} A well-known fact from Riemannian geometry established by Myers--Steenrod \cite{myers-steenrod1939} and Palais \cite{palais1957}  states a bijective map between two   Riemannian manifolds which preserves the respective distance functions is a smooth isometry. The converse holds for trivial reasons. Conceptually, this means the  metric structure of a Riemannian manifold already determines its  geometry.

The counterpart of this fact in Lorentzian geometry --- the mathematical foundation of Einstein's general relativity --- was shown by Hawking--King--McCarthy \cite{hawking-king-mccarthy1976}. We refer to \cref{Sub:Prel} for details about spacetimes and causality theory. Here and in the sequel, we assume  all spacetimes  have the same dimension $d\in\N$ no less than $3$. Hawking--King--McCarthy's result states a map between two strongly causal\footnote{A spacetime is called strongly causal if it has no almost closed causal curves.}   spacetimes is a smooth isometry if and only if it is bijective and preserves the respective time separation functions. In comparison with its Riemannian precedent, this result is quite  surprising. The time separation function is not a metric; for instance, it obeys the \emph{reverse} triangle inequality. Hence, it encodes not even topological information a priori. Still,  the time separation function of a spacetime suffices to recover its entire geometry.

In addition, Hawking--King--McCarthy \cite{hawking-king-mccarthy1976}  established a related theorem --- which is central for the spacetime re\-construction problem outlined further below and our work --- that was later sharpened by  Malament \cite{malament1977}. Recall that the positivity superlevel set of a time separation function  defines the chronological relation of a spacetime. The Hawking--King--Malament--McCarthy theorem shows affirmatively that chronology-preserving maps in fact determine the conformal class of a Lorentzian metric.

\begin{theorem}[Hawking--King--Malament--McCarthy's theorem  \cite{hawking-king-mccarthy1976,malament1977}]\label{Th:TheMALA} Assume  $(\mms,g)$ and $(\mms',g')$ constitute two distinguishing spacetimes. Then a map $\iota\colon\mms\to\mms'$ is a smooth conformal isometry if and only if it is a chronological isomorphism.
\end{theorem}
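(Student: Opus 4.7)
The ``only if'' direction is immediate: a smooth conformal isometry $\iota^*g'=\rme^{2f}g$ preserves timelike-ness of tangent vectors pointwise, hence timelike curves, and hence the chronological relations in both directions.

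For the nontrivial ``if'' direction, I would proceed in three stages. First, $\iota$ is automatically a homeomorphism: the chronological diamonds $I^+(p)\cap I^-(q)$ form a basis for the Alexandrov topology on each factor, and the distinguishing hypothesis is exactly what is needed to recover the manifold topology from this purely order-theoretic basis (using that a point is uniquely determined by the pair $(I^+(p),I^-(p))$). Second, one upgrades the preservation of $I^\pm$ to preservation of $J^\pm$ and of null geodesics: in a distinguishing spacetime, $q\in J^+(p)$ admits the purely chronological characterization $I^+(q)\subset I^+(p)$, and a continuous causal curve is a null pregeodesic precisely when it is achronal, again a chronological notion. So $\iota$ maps null geodesics of $(\mms,g)$ to null geodesics of $(\mms',g')$.

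The third stage, and the real heart of the argument, is to upgrade preservation of null cones to smoothness and to the conformal property. Here I would choose, at each $p\in\mms$, a collection of null directions in general position and use the associated null geodesic segments as (smooth) coordinates on a convex normal neighborhood; since $\iota$ maps these segments to null geodesic segments of $(\mms',g')$ in an incidence-preserving way, $\iota$ acquires smooth coordinate expressions. Its differential $\Diff\iota$ then maps the null cone at $p$ bijectively onto the null cone at $\iota(p)$, and a classical linear algebra fact (valid in dimension $d\geq 3$) forces $\Diff\iota$ to be a positive multiple of a Lorentz transformation. Patching these local conformal factors yields $\iota^*g'=\rme^{2f}g$ for some $f\in \Cont^\infty(\mms)$.

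The main obstacle is precisely this final bootstrap from continuous, combinatorial data (preservation of null cones as sets) to the smoothness of $\iota$. The dimension hypothesis $d\geq 3$ and the rigidity of the Lorentzian conformal group are what make this step possible and sharply distinguish the theorem from its Riemannian counterpart, where the much stronger hypothesis of preserving the metric is needed to conclude smoothness of the map.
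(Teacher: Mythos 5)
The paper does not prove this statement; it is quoted from Hawking--King--McCarthy and Malament, so your proposal has to be measured against the classical argument. Your overall architecture (homeomorphism $\Rightarrow$ preservation of null geodesics $\Rightarrow$ conformal rigidity of the null cone in dimension $d\geq 3$) is the right one, and your third stage is essentially Theorem 5 of \cite{hawking-king-mccarthy1976}. But your first stage contains a genuine gap. The Alexandrov topology (generated by the diamonds $I^+(p)\cap I^-(q)$) coincides with the manifold topology if and only if the spacetime is \emph{strongly causal}; for merely distinguishing spacetimes it can be strictly coarser, and distinction is strictly weaker than strong causality on the causal ladder. So ``chronological isomorphism $\Rightarrow$ Alexandrov homeomorphism'' is automatic but does not yield a manifold-topology homeomorphism, and your argument as written only proves the Hawking--King--McCarthy version of the theorem (under strong causality), not Malament's sharpening to distinguishing spacetimes --- which is precisely the generality the statement claims, and which Malament showed is sharp. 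The correct route under distinction is Malament's: show first that $\iota$ and $\iota^{-1}$ preserve \emph{continuous timelike curves}, using the local ``no-return'' neighborhoods that characterize (future/past) distinction (cf.\ \cite{minguzzi2019-causality}*{Thm.~4.44}; this is exactly the mechanism of \cref{Le:PresTL} in the present paper), and then invoke the fact that the class of continuous timelike curves determines the manifold topology.

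A secondary imprecision occurs in your second stage: the inclusion $I^+(q)\subset I^+(p)$ is \emph{not} a characterization of $q\in J^+(p)$; it characterizes $q\in\cl\,J^+(p)$, i.e.\ the relation $p\leq_\rmf q$ of \cref{Sub:Causthry}, and $J^+(p)$ need not be closed in a distinguishing spacetime. What the classical proof actually uses is the \emph{local} statement: once homeomorphy is in hand, inside a convex normal neighborhood one recognizes the points on null geodesics from $p$ as $\partial I^+(p)$ minus $I^+(p)$ (equivalently, via achronality of local causal curves), and this local recognition suffices to show $\iota$ maps null geodesic segments to null geodesic segments. With that repair, your stage three --- null coordinates, the linear-algebra fact that a bijection of null cones in dimension $\geq 3$ is a positive multiple of a Lorentz transformation, and patching of the pointwise conformal factors into a smooth $\Sigma$ with $\iota^*g'=\Sigma^2\,g$ --- is the standard and correct conclusion.
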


Here, a chronological isomorphism between spacetimes means a bijective chronology-preserving map, cf.~\cref{Def:Chronpres}.

As shown by Malament \cite{malament1977}, the above  hypothesis of distinction is sharp; the claim of \cref{Th:TheMALA} is wrong if merely future \emph{or} past distinction is assumed.

\begin{remark}[About bijectivity] In all results  above and  below, the possible requirement of bijectivity can be replaced by surjectivity. We will only deal with spacetimes for which maps that preserve a certain object or structure are injective, cf.~e.g.~\cref{Le:Inj}.
\end{remark}

\subsection{...to the causal set approach to quantum gravity}\label{Sub:Rel} The relevance of \cref{Th:TheMALA} in applications is best illustrated by the spacetime reconstruction problem from the perspective of causal set theory (CST). 

Broadly speaking, the spacetime reconstruction problem  lies at the heart of  quantum gravity. Some candidate theories --- including CST \cite{bombelli-lee-meyer-sorkin1987,surya2019,dowker-surya2024}, loop quantum gravity \cite{rovelli1998}, spin foam models \cite{perez2003}, or AdS/CFT \cite{maldacena1998} --- treat spacetime not as fundamentally given, but as a structure emerging from more primitive degrees of freedom, e.g.~causal order, algebraic relations, or quantum states. The spacetime reconstruction problem  thus bridges micro\-physical models and the macroscopic continuum limit. 

The CST perspective on quantum gravity  is ``distinguished by its logical simplicity and by the fact that it incorporates the assumption of an underlying spacetime discreteness organically and from the very beginning'', as popularized by Sorkin \cite{sorkin2024}. A slogan that describes its program  (cf.~e.g.~Surya \cite[p.~5]{surya2019}) widely attributed to  Sorkin  but already expressed in an unpublished CERN preprint of Myrheim \cite{myrheim1978} is
\begin{align}\label{Eq:ONG}
\textnormal{order} + \textnormal{number} = \textnormal{geometry}.
\end{align}
Early attempts to quantify physical implications of order (including efforts of Weyl and Lorentz) are almost as old as general rela\-ti\-vity, cf.~Robb \cite{robb1936}. The slogan \eqref{Eq:ONG}, which gained visibility in the early 2000s, was inspired by \cref{Th:TheMALA}. In a nutshell, 
\begin{itemize}
\item 9 of the 10 degrees of freedom in spacetime are predicted to come from \emph{order}, which fixes the conformal class of spacetime (cf.~Finkelstein \cite{finkelstein1969}), and 
\item the remaining degree should be fixed by \emph{number}  (that relates to volume), which determines the conformal factor.
\end{itemize}
This is best illustrated by the following folklore result and its proof. 

\begin{theorem}[Spacetime reconstruction from measure-preserving chronological isomorphisms]\label{Th:CI} Let $(\mms,g)$ and $(\mms',g')$ be   distinguishing spacetimes. Then a map $\iota\colon\mms\to\mms'$ is a smooth isometry if and only if it is a volume-preserving chronological isomorphism. 
\end{theorem}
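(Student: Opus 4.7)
The plan is to reduce Theorem~\ref{Th:CI} to the Hawking--King--Malament--McCarthy theorem (\cref{Th:TheMALA}) by a short conformal-factor computation, so the real content is contained in \cref{Th:TheMALA} together with an elementary scaling identity for Lorentzian volume forms.

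The ``only if'' direction is immediate. If $\iota$ is a smooth isometry then $\iota^{*}g'=g$. Directly from this identity, $\iota$ preserves the time separation function (hence the chronological relation) and pushes the volume measure induced by $g$ to the one induced by $g'$, so $\iota$ is a volume-preserving chronological isomorphism.

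For the ``if'' direction, assume $\iota$ is a volume-preserving chronological isomorphism. Since distinguishing spacetimes are in particular both future and past distinguishing, \cref{Th:TheMALA} applies and yields that $\iota$ is already a smooth conformal isometry. Thus there exists a smooth function $\phi\colon\mms\to\R$ with
\begin{equation*}
\iota^{*}g' = \rme^{2\phi}\,g.
\end{equation*}
Working in local charts and using $\vol_{g}=\sqrt{|\det g|}\,\rmd x$, the induced Lorentzian volume densities then transform as $\iota^{*}\vol_{g'} = \rme^{d\phi}\,\vol_{g}$, where $d\geq 3$ is the common dimension. On the other hand, volume-preservation means $\iota_{\push}\vol_{g}=\vol_{g'}$; because $\iota$ is already known to be a smooth diffeomorphism, this is equivalent to $\iota^{*}\vol_{g'}=\vol_{g}$. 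Comparing the two expressions pointwise gives $\rme^{d\phi}\equiv 1$ and hence $\phi\equiv 0$, so $\iota^{*}g'=g$ and $\iota$ is a smooth isometry.

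There is no genuine obstacle here beyond correctly reading off the conformal scaling of the volume form (producing the exponent $d\phi$ rather than, e.g., $2\phi$) and invoking \cref{Th:TheMALA} with the correct hypothesis. The remark preceding the statement takes care of weakening bijectivity to surjectivity, since a chronology-preserving map out of a distinguishing spacetime is automatically injective (as it will be recorded in \cref{Le:Inj}).
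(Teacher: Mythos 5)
Your proposal is correct and follows essentially the same route as the paper: invoke \cref{Th:TheMALA} to upgrade the chronological isomorphism to a smooth conformal isometry, then combine the conformal scaling $\vol_{\Sigma^2 g}=\Sigma^d\,\vol_g$ with volume-preservation and the full support of $\vol_g$ to force the conformal factor to be identically one. Writing the factor as $\rme^{2\phi}$ rather than $\Sigma^2$ with $\Sigma$ nowhere vanishing is an immaterial notational difference.
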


\begin{proof} The ``only if'' implication is trivially true.

We now show the ``if'' implication. By \cref{Th:TheMALA}, $\iota$ is a smooth conformal isometry; in particular, the hypothesis about volume-preservation makes sense. This implies there is a smooth nowhere vanishing function $\smash{\Sigma\colon\mms\to\R}$ with $\smash{\iota^*g' = \Sigma^2\,g}$. Since $\smash{\iota^{-1}}$ is volume-preserving, standard transformation rules of volume measures, cf.~\cref{Le:Trafo},   imply 
\begin{align*}
\vol_g = \iota^{-1}_\push\vol_{g'} = \vol_{\iota^*g'} = \vol_{\Sigma^2\,g} = \Sigma^d\,\vol_g.
\end{align*}
Since $\smash{\vol_g}$ is fully supported and $\Sigma$ is smooth, this gives $\Sigma = 1$ everywhere on $\mms$. In turn, we obtain $\iota^*g' = g$, as desired.
\end{proof}

We are not aware of a rigorous mathematical result that captures \eqref{Eq:ONG} in the realm of CST, i.e.~in discrete language. Observe that \cref{Th:CI} (and \cref{Th:TheMALA}) is phrased in continuum, where counting the number of points has no  meaning. In the last decades, Poisson point processes (PPP, cf.~e.g.~Last--Penrose \cite{last-penrose2018}) have become a standard way to ``discretize''  spacetime stochastically, cf.~Myrheim \cite{myrheim1978}, Bombelli--Lee--Meyer--Sorkin \cite{bombelli-lee-meyer-sorkin1987}, Brightwell--Gregory \cite{brightwell-gregory1991}, and Bombelli \cite{bombelli2000}:  order is quantified by the order statistics from its samples and the  underlying  chronological relation, while the number component measured by cardinality is well-defined (as samples of PPPs are a.s.~locally finite) and consistent (as PPPs recover the volume measure in expectation). In terms of Poisson sprinklings, a rigorous statistical form of \eqref{Eq:ONG} was conjectured by Bombelli \cite{bombelli2000}. 

Our main result will rigorously reflect \eqref{Eq:ONG} and confirm a weak version of Bombelli's conjecture. We   elaborate on   this --- and possible connections to the most prominent open problem in CST, the \emph{Hauptvermutung} (viz.~fundamental conjecture) --- in \cref{Re:RelationBCHV}.

Despite the apparent lack of a rigorous version of  \eqref{Eq:ONG}, several works have recovered continuum quantities in spacetime \emph{approximately} from their discrete approximations by  infinite intensity limits of Poisson sprinklings.  Notable instances of such quantities are dimension and volume (Myrheim \cite{myrheim1978}), time separation function (Brightwell--Gregory \cite{brightwell-gregory1991}), scalar curvature (Benincasa--Dowker \cite{benincasa-dowker2010}), ``asymptotic silence''  (Eichhorn--Mizera--Surya \cite{eichhorn-mizera-surya2017}), and convex neighborhoods (Glaser--Surya \cite{glaser-surya2013}). In particular,  Brightwell--Gregory \cite{brightwell-gregory1991} demonstrate time separation functions are no  fundamental quantities in CST.

\subsection{Contributions}\label{Sub:Contri} 

\subsubsection{Main result} The main result  of our work is a spacetime reconstruction theorem. It is  a probabilistic and discrete analog of \cref{Th:CI} in the style of \eqref{Eq:ONG}.

Before stating our main \cref{Th:Main}, we introduce some preliminary modeling. Let $(\mms,g)$ be a spacetime of finite volume $\lambda > 0$. All stochastic objects will be modeled on the same probability space $(\Omega,\scrF,\PPP)$. Let $(X_i)_{i\in\N}$ be a sequence of i.i.d.~random variables $X_i\colon\Omega\to\mms$ with law $\meas$ under $\PPP$, where $\meas$ is the normalized volume measure
\begin{align}\label{Eq:NorVol}
\lambda\,\meas=\vol_g.
\end{align}
Given $k\in\N$, we  think of $\{X_1,\dots,X_k\}$ as the random support of a PPP conditioned on  $k$ elements. The order   statistics of $\{X_1,\dots,X_k\}$ with respect to the chronological relation $\ll$ from $(\mms,g)$ are described by the random adjacency matrix $\smash{\sfC^k(X_1,\dots,X_k)}$; here, the map $\smash{\sfC^k\colon\mms^k \to \{0,1\}^{k\times k}}$ is defined  by
\begin{align*}
\sfC^k(x_1,\dots,x_k)_{ij} := \begin{cases} 1 & \textnormal{if }x_i\ll x_j,\\
0 & \textnormal{otherwise}.
\end{cases}
\end{align*}
This turns $\{X_1,\dots,X_k\}$ into a random causal set. The law of $\smash{\sfC^k(X_1,\dots,X_k)}$ under $\PPP$ is
\begin{align}\label{Eq:nun}
\nu^k := \sfC^k_\push\meas^{\otimes k},
\end{align}
where $\sharp$ denotes the usual push-forward operation of measures.

Similar notions are adopted for a second spacetime $(\mms',g')$ with the same properties. Here and in the sequel, we adopt the following notational convention: if a quantity is tagged with a prime $'$, it is understood relative  to $(\mms',g')$.

\begin{theorem}[Probabilistic spacetime reconstruction, unweighted case]\label{Th:Main}  Let $(\mms,g)$ and $(\mms',g')$ be two causally continuous and future chronocomplete spacetimes of finite volume $\lambda > 0$. Then   the two spacetimes $(\mms,g)$ and $(\mms',g')$ are smoothly isometric if and only if for every $k\in\N$, the distributions  of the random adjacency matrices $\smash{\sfC^k(X_1,\dots,X_k)}$ and $\smash{\sfC'^k(X_1',\dots,X_k')}$ under $\PPP$ coincide, i.e.
\begin{align}\label{Eq:Hyyy}
\sfC^k(X_1,\dots,X_k)_\push\PPP = \sfC'^k(X_1',\dots,X_k')_\push\PPP.
\end{align}
\end{theorem}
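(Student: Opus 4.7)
The ``only if'' direction is immediate: a smooth isometry preserves chronology and, via the transformation rule in \cref{Le:Trafo}, the normalized volume measure, so the push-forwards under $\sfC^k$ and $\sfC'^k$ agree for every $k\in\N$. The strategy for the ``if'' direction is to promote \eqref{Eq:Hyyy} to a volume-preserving chronological isomorphism $\iota\colon\mms\to\mms'$ and then invoke \cref{Th:CI}.

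\textbf{Step 1: Coupling and density.} The family \eqref{Eq:Hyyy} for $k\in\N$ prescribes all finite-dimensional marginals of the countable random relational structure on $\N$ determined by $i\prec j\iff X_i\ll X_j$. By Kolmogorov extension, this structure has the same law as its primed counterpart on $\{0,1\}^{\N\times\N}$. Hence we may couple $(X_i)_{i\in\N}$ and $(X_i')_{i\in\N}$ on a common probability space so that $\PPP$-almost surely
\begin{align*}
X_i\ll X_j \iff X_i'\ll' X_j' \qquad \textnormal{for all } i,j\in\N.
\end{align*}
Because $\meas$ and $\meas'$ are fully supported, the sequences $(X_i)_{i\in\N}$ and $(X_i')_{i\in\N}$ are $\PPP$-a.s.~dense in $\mms$ and $\mms'$, respectively.

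\textbf{Step 2: Points are determined by their chronological signature.} For $x\in\mms$, define the chronological signature
\begin{align*}
\tau(x) := \bigl(\{i\in\N : X_i \in I^-(x)\},\ \{i\in\N : X_i \in I^+(x)\}\bigr) \subset \N\times\N,
\end{align*}
and analogously $\tau'$ on $\mms'$. Distinction (which is implied by causal continuity) together with the a.s.~density of the samples shows $\tau$ is injective: for $x\neq y$ in $\mms$ either $I^+(x)\triangle I^+(y)$ or $I^-(x)\triangle I^-(y)$ is a nonempty open set and therefore meets $(X_i)_{i\in\N}$. The further claim is that $\tau(\mms)=\tau'(\mms')$ $\PPP$-almost surely. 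This is where causal continuity and future chronocompleteness enter: for $x\in\mms$ one recovers $x$ as the unique point compatible with the nested families of chronological diamonds $\smash{I^+(X_i)\cap I^-(X_j)}$ singled out by $\tau(x)$; the coupling of Step~1 transplants these diamonds to $\mms'$, and the completeness assumption guarantees that the analogous intersection on the primed side is realized by a (unique) point $\iota(x)\in\mms'$.

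\textbf{Step 3: Chronology and volume preservation; conclusion.} Set $\iota := (\tau')^{-1}\circ\tau\colon\mms\to\mms'$. By construction $\iota$ is a bijection extending the a.s.~matching $X_i\mapsto X_i'$. Chronology preservation follows from the push-up lemma and density: if $x\ll y$ then $I^+(x)\cap I^-(y)$ is open and nonempty, so contains some $X_i$; the coupling then yields $\iota(x)\ll' X_i' \ll' \iota(y)$, and the converse is symmetric. Hence $\iota$ is a chronological isomorphism. Volume preservation follows from the Glivenko--Cantelli theorem: the empirical measures $\smash{\frac{1}{n}\sum_{i=1}^{n}\delta_{X_i}}$ converge weakly to $\meas$ almost surely, and likewise on the primed side; since $\iota$ maps $X_i$ to $X_i'$, passing to the limit gives $\iota_\push\meas=\meas'$ and so $\iota_\push\vol_g=\vol_{g'}$ by \eqref{Eq:NorVol}. \cref{Th:CI} now yields that $\iota$ is a smooth isometry.

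\textbf{Main obstacle.} The central difficulty is the reconstruction claim in Step~2: although injectivity of $\tau$ comes cheaply from distinction plus density, surjectivity onto $\tau'(\mms')$ requires showing that any signature realized in $\mms'$ is realized in $\mms$, i.e.~that the signature-to-point correspondence is complete on both sides. One expects this to reduce to a continuity/compactness argument for nested chronological diamonds, using causal continuity to ensure that $x\mapsto I^\pm(x)$ behaves well under limits and future chronocompleteness to ensure that the candidate limit point exists in the spacetime; the latter condition is precisely the mechanism that should compensate for merely finite-volume (rather than globally hyperbolic) hypotheses.
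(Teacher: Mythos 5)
Your overall architecture coincides with the paper's: extract from \eqref{Eq:Hyyy} a pair of chronologically isomorphic countable dense samples (via Kolmogorov extension and almost-sure genericity of the empirical measures), extend the resulting order isomorphism of dense sets to a chronological isomorphism of the full spacetimes, then conclude by Malament's theorem plus the volume-rigidity argument of \cref{Th:CI}. Steps~1 and~3 are essentially sound, with two small caveats: the coupling in Step~1 requires a disintegration of $\meas^{\otimes\infty}$ over the common law of the infinite adjacency matrix --- equality of push-forwards alone does not produce a pointwise a.s.\ coupling --- whereas the paper avoids this entirely by intersecting two full-measure sets of generic sequences; and the a.s.\ weak convergence of empirical measures is the strong law of large numbers (Varadarajan), not Glivenko--Cantelli.

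The genuine gap is the one you flag yourself: the claim $\tau(\mms)=\tau'(\mms')$ in Step~2, i.e.\ that every chronological signature realized in one spacetime is realized by an actual point of the other. This is precisely the content of the paper's extension theorem (\cref{Th:Ext}, with uniqueness supplied by \cref{Pr:Uniq}) and is the technical heart of the entire proof; it cannot be dispatched by ``a continuity/compactness argument for nested chronological diamonds''. Under the stated hypotheses no compactness is available: causal continuity, finite volume, and future chronocompleteness do not make chronological diamonds relatively compact --- that would essentially be global hyperbolicity, which the theorem deliberately avoids. The mechanism that works is order-theoretic rather than topological: one builds a chronologically \emph{increasing} sequence in the dense set converging to $x$ and bounded above by some $x^+$ with $x\ll x^+$; future chronocompleteness of the target then yields a limit $\iota(x)$ of the image sequence. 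Even granting existence, showing the extended map preserves chronology requires the chain $\smash{\iota(x)\leq_\rmf'\iota(x^+)\ll'\iota(y^-)\leq_\rmp'\iota(y)}$ together with the push-up property of Minguzzi's $D$-relations (\cref{Le:Push-up}) and both future and past reflectivity, while uniqueness of the limit (hence well-definedness of $\iota$ and the identity $\kappa=\iota^{-1}$) needs antisymmetry of $\smash{\leq_\rmp'}$ from distinction plus a Malament-type lemma that chronology-preserving maps with dense image preserve continuous timelike curves (\cref{Le:PresTL}). None of this appears in your sketch, and without it the map $\iota=(\tau')^{-1}\circ\tau$ is not known to be defined off the sample points.
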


In particular, the latter hypothesis is a probabilistic and discrete Lorentz invariant. 

In light of \eqref{Eq:ONG}, the random adjacency matrices $\smash{\sfC^k(X_1,\dots,X_k)}$ and $\smash{\sfC'^k(X_1',\dots,X_k')}$ in \cref{Th:Main} determine the statistical ``order'' of the samples in question, whereas the dimension $k\in\N$ corresponds to their ``number''.

We briefly comment on the hypotheses of \cref{Th:Main}. Causal continuity, a classical causality condition defined by Hawking--Sachs \cite{hawking-sachs1974}, means distinction and reflectivity, cf.~\cref{Def:Causalcontinuity,Def:Distinction,Def:Reflectivity}. In view of \cref{Th:TheMALA}, distinction is natural. The notion of chronocompleteness is new, cf.~\cref{Def:RelF}. Roughly speaking, it asserts that every sequence which increases and is upper bounded along the chronological relation must converge. This is a weak variant of future complete metric measure spacetimes recently defined by Beran et al.~\cite{beran-braun-calisti-gigli-mccann-ohanyan-rott-samann+-} and further elaborated on by Gigli \cite{gigli+}. Such a ``completion procedure'' was already anticipated e.g.~by Bombelli \cite{bombelli2000}; completions of (strict) partial orders had already been known before in theoretical computer science, cf.~e.g.~Gierz et al.~\cite{gierz-hofmann-keimel-lawson-mislove-scott1980}. Global  hyperbolicity implies all these properties, cf.~\cref{Le:GHtoFC}.

The ``only if'' implication from \cref{Th:Main} is trivially true. An ingredient for the proof of the ``if'' implication of  \cref{Th:Main} that our work  pioneers --- which we think is of independent interest --- is an extension result for  chronology-preserving maps, cf.~\cref{Th:Ext}. In a nutshell, future chronocompleteness yields existence of an extension by a natural limit procedure; reflectivity yields uniqueness of this (one-sided) limit. Helpful tools for the proof of \cref{Th:Ext} will be Minguzzi's $D$-relations  \cite{minguzzi2008-ladder-i}; for causally continuous space\-times, they coincide with Sorkin--Woolgar's $K$-relation \cite{sorkin-woolgar1996}, as proven by Minguzzi \cite{minguzzi2008-ladder-i}.

\cref{Th:Main} is in the spirit of  Gromov's  celebrated  reconstruction theorem  for metric measure spaces \cite{gromov1981} and its recent adaptation to Minguzzi--Suhr's bounded Lorentzian metric spaces \cite{minguzzi-suhr2022} by Braun--Sämann \cite{braun-samann+}. The latter does not apply here, as it makes  hypotheses on the  time separation functions in question, which are not invariant under conformal changes, unlike our setting. Nevertheless, using our extension \cref{Th:Ext} we will use Vershik's beautiful proof strategy presented in Gromov's book \cite{gromov1981} and its Lorentzian adaptation  by Braun--Sämann \cite{braun-samann+} to show \cref{Th:Main}.

\subsubsection{Weighted extension of \cref{Th:Main}} Our second main result is a weighted version of \cref{Th:Main}. In this generality, we lose rigidity encoded by the reference measures --- cf.~the proof of \cref{Th:CI} --- which clarifies we cannot expect isometry of the original spacetimes. However, we still get isometry of explicit conformal changes, cf.~\cref{Re:IsoCon}.

We adopt the modeling for \cref{Th:Main} except that the normalized volume measure of $(\mms,g)$ --- and analogously for the other spacetime --- is replaced by the normalized weighted volume measure with prescribed continuous potential  $V\colon\mms\to\R$:
\begin{align}\label{Eq:NorVolII}
\lambda\,\meas = \rme^V\,\vol_g.
\end{align}

\begin{theorem}[Probabilistic spacetime reconstruction, weighted case]\label{Th:Weighted} We let $(\mms,g)$ and $(\mms',g')$ be two causally continuous and future chronocomplete spacetimes. Given  two continuous functions $V\colon\mms\to\R$ and $V'\colon\mms'\to \R$, we endow $(\mms,g)$ and $(\mms',g')$ with the weighted volume measures $\smash{\rme^V\,\vol_g}$ and $\smash{\rme^{V'}\,\vol_{g'}}$, respectively. Assume the latter have finite mass $\lambda > 0$. Then the measured spacetimes $\smash{(\mms,g,\rme^V\,\vol_g)}$ and $\smash{(\mms',g',\rme^{V'}\,\vol_{g'})}$ are smoothly con\-formally isometric  through a measure-preserving map $\iota\colon\mms\to\mms'$ if and only if for every $k\in\N$, the laws of the random adjacency  matrices $\smash{\sfC^k(X_1,\dots,X_k)}$ and $\smash{\sfC'^k(X_1',\dots,X_k')}$ under $\PPP$ coincide.

In either case, the conformal factor $\Sigma\colon \mms\to \R$ certifying $\iota^*g' = \Sigma^2\,g$  obeys
\begin{align}\label{Eq:Rel}
\Sigma^d = \rme^V\,\rme^{-V'\circ\iota}.
\end{align}
\end{theorem}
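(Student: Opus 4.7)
The plan is to adapt the proof of \cref{Th:Main} and then extract the conformal factor $\Sigma$ by a computation exactly analogous to the one in the proof of \cref{Th:CI}. The ``only if'' direction is immediate: if $\iota$ is a measure-preserving chronological isomorphism between $(\mms,g,\rme^V\,\vol_g/\lambda)$ and $(\mms',g',\rme^{V'}\,\vol_{g'}/\lambda)$, then $(\iota(X_1),\dots,\iota(X_k))$ is i.i.d.\ with the law of $(X_1',\dots,X_k')$ and $\sfC^k(X_1,\dots,X_k)=\sfC'^k(\iota(X_1),\dots,\iota(X_k))$ entrywise, so \eqref{Eq:Hyyy} holds for every $k\in\N$.

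For the ``if'' direction, I would first re-run the Vershik-style extraction that underlies the proof of \cref{Th:Main} with the weighted measures in place of the unweighted ones. The crucial observation is that the extraction depends on $\vol_g$ only as a fully supported finite Borel reference measure on a causally continuous, future chronocomplete spacetime, where \cref{Th:Ext} can globalize any chronology-preserving bijection obtained from a dense countable sample. Continuity of $V$ makes $\rme^V$ strictly positive, so $\rme^V\,\vol_g$ satisfies the same structural requirements; the identical extraction therefore yields, $\PPP$-almost surely, a bijective measure-preserving chronological isomorphism $\iota\colon\mms\to\mms'$ with respect to the normalized weighted volume measures. Since causal continuity entails distinction, \cref{Th:TheMALA} then upgrades $\iota$ to a smooth conformal isometry $\iota^*g'=\Sigma^2\,g$ for some smooth nowhere-vanishing $\Sigma\colon\mms\to\R$.

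To close, I would pull back the measure-preservation identity and combine it with the transformation rule $\vol_{\iota^*g'}=\vol_{\Sigma^2 g}=\Sigma^d\,\vol_g$ furnished by \cref{Le:Trafo}. Applying $\iota^{-1}_\push$ to $\iota_\push(\rme^V\,\vol_g)=\rme^{V'}\,\vol_{g'}$ gives
\begin{align*}
\rme^V\,\vol_g = \iota^{-1}_\push(\rme^{V'}\,\vol_{g'}) = (\rme^{V'}\circ\iota)\,\vol_{\iota^*g'} = (\rme^{V'}\circ\iota)\,\Sigma^d\,\vol_g.
\end{align*}
Because $\vol_g$ is fully supported and $\Sigma$ is smooth, pointwise division yields $\Sigma^d=\rme^V\,\rme^{-V'\circ\iota}$ on all of $\mms$, which is precisely \eqref{Eq:Rel}.

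The main obstacle I anticipate is verifying the first step --- that the Vershik-type extraction underlying \cref{Th:Main} is genuinely measure-agnostic, i.e.\ uses the sampling measure only through its being a fully supported finite Borel probability measure. In a Gromov/Vershik-style reconstruction this is expected, since the measure enters only through the i.i.d.\ samples and the law of their chronological adjacency matrices, so that reweighting by a strictly positive continuous density $\rme^V$ affects nothing beyond notation; nevertheless, explicitly checking that no step silently invokes a Riemannian-volume-specific property is the real content of the adaptation. The remaining steps are formal and mirror the end of the proof of \cref{Th:CI}.
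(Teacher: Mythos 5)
Your proposal is correct and follows essentially the same route as the paper: rerun the generic-sequence/extension argument of the unweighted proof with the weighted sampling measure (which works because $\rme^V\,\vol_g$ is still fully supported and finite), obtain a measure-preserving smooth conformal isometry via \cref{Th:Malament}, and then derive \eqref{Eq:Rel} from \cref{Le:Trafo} exactly as you compute. The ``obstacle'' you flag --- that the extraction uses the reference measure only through full support, finiteness, and separability --- is precisely the observation the paper relies on when it says to follow \cref{Sub:RDR} verbatim.
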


Notably, this result does not require the spacetimes in question to be bounded.

\begin{remark}[About isometry]\label{Re:IsoCon} The smooth and measure-preserving conformal isometry from  \cref{Th:Weighted} can equivalently be stated as smooth isometry of the two conformally changed  spacetimes $\smash{(\mms,\rme^{2V/d}\,g)}$ and $\smash{(\mms',\rme^{2V'/d}\,g')}$.
\end{remark}

\subsection{Outlook} \cref{Th:Main} is a weak form of Bombelli's conjecture \cite{bombelli2000}. It would be a natural yet challenging extension of our main result. We detail this in \cref{Re:RelationBCHV}.

Let us comment on another possible generalization of  \cref{Th:Main,Th:Weighted}. These   involve only the chronological orders and reference measures of the spacetimes in question.   On the other hand, in recent years intense research activities were devoted to Lorentzian geometry by means  of metric geometry. We refer to the reviews of Cavalletti--Mondino \cite{cavalletti-mondino2022-review}, Sämann \cite{samann2024+}, McCann \cite{mccann+}, and Braun \cite{braun2025}. An abstract  ``converse'' Hawking--King--McCarthy theorem was recently proven  by Beran et al.~\cite{beran-braun-calisti-gigli-mccann-ohanyan-rott-samann+-}. These evidences motivate the question if \cref{Th:Main,Th:Weighted} hold in a synthetic way  (where, of course, the smooth conclusions should be replaced by weaker concepts); an extension to spacetimes with nonsmooth Lorentz\-ian metrics would  be equally interesting. In general, we only expect an analog of \cref{Th:Weighted}. As  \eqref{Eq:Rel}  shows, an extension of \cref{Th:Main} would necessarily include a rigid hypothesis on the reference measures. Yet, \cref{Re:IsoCon} seems to suggest this synthetification might be more natural in Kunzinger--Sämann's Lorentzian pre-length spaces \cite{kunzinger-samann2018} or Minguzzi--Suhr's bounded  Lorentzian metric spaces \cite{minguzzi-suhr2022} (and their unbounded extension  with Bykov \cite{bykov-minguzzi-suhr2025}) instead of Nachbin's abstract topological ordered spaces \cite{nachbin1965}. Either should be endowed with reference measures; a natural candidate for  a synthetic counterpart of \cref{Th:Main} is McCann--Sämann's Lorentz\-ian Hausdorff measure \cite{mccann-samann2022}.

\subsection{Organization} In  \cref{Sub:Prel}, we introduce  basic notions of probability theory, spacetime geometry, and causality theory. Moreover, we recall the fundamental results of  Hawking--King--McCarthy \cite{hawking-king-mccarthy1976} and Malament \cite{malament1977}.  In \cref{Sub:Prooof}, we pioneer our extension theorems for chronology-preserving maps and use these insights to prove \cref{Th:Main,Th:Weighted}.

\section{Preliminaries}\label{Sub:Prel}

\subsection{Probabilistic concepts} In the following, we briefly recall some basic concepts from probability theory that will be relevant for our discussion.

We first fix some terminology. A probability space $(\Omega,\scrF,\PPP)$ will always consist  of a measurable space  $(\Omega,\scrF)$ and a probability measure $\PPP$ (also called distribution) on it. By a random variable, we mean  a measurable map $X$ defined on $\Omega$ with values in a specified measurable space. The push-forward $X_\push\PPP$  is called \emph{law} of $X$ under $\PPP$. 

\subsubsection{Standard Borel spaces} A \emph{standard Borel space} is a measurable space $(\sfX,\scrA)$ such that there exists a Polish topology on $\sfX$ whose induced Borel $\sigma$-algebra coincides with $\scrA$; cf.~Kechris \cite{kechris1995}*{§12} for details. In particular, measurability with respect to standard Borel spaces is equivalent to Borel measurability. We let $\Prob(\sfX)$ designate  the space of all probability measures on such a space, endowed with the narrow topology (induced by convergence against bounded and continuous functions on $\sfX$  with values in $\R$); cf.~e.g. Ambrosio--Gigli--Savaré \cite{ambrosio-gigli-savare2008}*{§5.1} for details.

Given any standard Borel space $(\sfX,\scrA)$, the \emph{support} of an element $\nu\in \Prob(\sfX)$, denoted $\supp\nu$, is the smallest closed set with $\nu$-measure one. It exists, cf.~Kechris \cite{kechris1995}*{Ex.~17.38}.

\subsubsection{Kolmogorov's extension theorem}\label{Sub:Kolmo} The classical fact we recall now will only be needed and defined for countable families. We will follow the presentation from Kechris \cite{kechris1995}*{Ex.~17.16} and refer to  Tao \cite{tao2011}*{Thm.~2.4.3} for details.

Let $(\sfX_k,\scrA_k)_{k\in\N}$ constitute a sequence of measurable spaces. Given any $k\in\N$, let $\smash{f_{k+1}\colon\sfX_{k+1} \to \sfX_k}$ be a surjective measurable map. We define
\begin{align*}
\lim_{\leftarrow} \sfX_\bullet := \Big\lbrace (x_k)_{k\in\N} \in\prod_{k\in\N} \sfX_k : f_{k+1}(x_{k+1}) = x_k  \textnormal{ for every }k \in\N\Big\rbrace.
\end{align*}
As usual, let $\smash{\pr_k\colon\lim_{\leftarrow}\sfX_\bullet\to \sfX_k}$ designate the projection onto the $k$-th coordinate of its argument, where $k\in\N$.  
We endow the set $\lim_{\leftarrow}\sfX_\bullet$ with the $\sigma$-algebra
\begin{align*}
\lim_{\leftarrow} \scrA_\bullet := \sigma\Big[\!\bigcup_{k\in\N} \pr_k^{-1}(\scrA_k)\Big].
\end{align*}
The measurable space $(\lim_\leftarrow\sfX_\bullet,\lim_\leftarrow\scrA_\bullet)$ is called \emph{projective limit} of $(\sfX_k,\scrA_k)_{k\in\N}$ with respect to $(f_k)_{k\in\N}$. It is a standard Borel space if $(\sfX_k,\scrA_k)$ is so for every $k\in\N$.

We  call a sequence $(\nu_k)_{k\in\N}$ of probability measures $\nu_k$ on $(\sfX_k,\scrA_k)$ \emph{projective} with respect to $(f_k)_{k\in\N}$ if $(f_{k+1})_\push\nu_{k+1} = \nu_k$ for every $k\in\N$.

\begin{theorem}[Kolmogorov's extension theorem]\label{Th:Kolmogorov} Assume $(\sfX_k,\scrA_k)_{k\in\N}$ is a sequence of standard Borel spaces. Let $(f_k)_{k\in\N}$ be as above. Moreover, let $(\nu_k)_{k\in\N}$ be a projective family of probability measures $\nu_k$ on $(\sfX_k,\scrA_k)$ with respect to $(f_k)_{k\in\N}$. Then there is a unique probability measure $\lim_\leftarrow\nu_\bullet$ on $(\lim_\leftarrow\sfX_\bullet,\lim_\leftarrow\scrA_\bullet)$  such that for every $k\in\N$,
\begin{align*}
(\pr_k)_\push\big[\!\lim_\leftarrow\nu_\bullet\big] = \nu_k.
\end{align*}
\end{theorem}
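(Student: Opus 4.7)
The plan is to construct $\lim_{\leftarrow}\nu_\bullet$ on the natural algebra of cylinder sets inside $\lim_{\leftarrow}\sfX_\bullet$ and then invoke the Carath\'eodory extension theorem. Let $\mathcal{F}_0 \subseteq \lim_{\leftarrow}\scrA_\bullet$ consist of those sets expressible as $\pr_k^{-1}(A_k)$ for some $k \in \N$ and some $A_k \in \scrA_k$. The identity $\pr_k = f_{k+1} \circ \pr_{k+1}$ gives $\pr_k^{-1}(A_k) = \pr_{k+1}^{-1}(f_{k+1}^{-1}(A_k))$, so any finite collection of cylinders can be represented at a common level $k$. It follows that $\mathcal{F}_0$ is an algebra generating $\lim_{\leftarrow}\scrA_\bullet$.

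On $\mathcal{F}_0$ I would define the pre-measure $\mu_0$ by $\mu_0(\pr_k^{-1}(A_k)) := \nu_k(A_k)$. The projectivity assumption $(f_{k+1})_\push\nu_{k+1} = \nu_k$ makes this value independent of the chosen representative level, and finite additivity of $\mu_0$ then follows from finite additivity of each $\nu_k$ after promoting the relevant cylinders to a common level.

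The main obstacle is countable additivity of $\mu_0$ on $\mathcal{F}_0$, which by the usual reduction amounts to showing that every decreasing sequence $(E_n)_{n\in\N}\subseteq \mathcal{F}_0$ with $\mu_0(E_n)\geq \delta > 0$ for every $n$ must have nonempty intersection. This is where the standard Borel assumption on each $(\sfX_k,\scrA_k)$ becomes essential: one endows $\sfX_k$ with a compatible Polish topology, so that every Borel probability measure on it is inner regular by compact sets. Writing $E_n = \pr_{k_n}^{-1}(A_n)$ with $k_n\uparrow \infty$, one selects compact sets $K_n \subseteq A_n$ with $\nu_{k_n}(A_n \setminus K_n) < \delta/2^{n+1}$; the cylinders $\pr_{k_n}^{-1}(K_n)$ retain $\mu_0$-measure at least $\delta/2$ in finite intersection, and a standard diagonal compactness argument that exploits the consistency of the projections under the $f_j$ produces a compatible thread $(x_k)_{k\in\N} \in \bigcap_n E_n$. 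Once countable additivity is in hand, the Carath\'eodory extension theorem promotes $\mu_0$ uniquely to a probability measure $\lim_{\leftarrow}\nu_\bullet$ on $\sigma(\mathcal{F}_0) = \lim_{\leftarrow}\scrA_\bullet$, which by construction satisfies $(\pr_k)_\push[\lim_{\leftarrow}\nu_\bullet] = \nu_k$ for every $k \in \N$. Uniqueness in the statement is then immediate from Dynkin's $\pi$-$\lambda$ theorem: any two probability measures with the prescribed marginals agree on the $\pi$-system $\mathcal{F}_0$ and hence on $\lim_{\leftarrow}\scrA_\bullet$.
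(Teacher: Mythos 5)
The paper does not prove this statement; it is recalled as a classical fact with pointers to Kechris (Ex.~17.16) and Tao (Thm.~2.4.3), so there is no internal proof to compare against. Your outline is the standard Carath\'eodory route, and most of it is sound: the cylinder sets do form an algebra generating $\lim_\leftarrow\scrA_\bullet$, projectivity (together with surjectivity of the bonding maps, hence of the projections $\pr_k$, which you should mention) makes the pre-measure well defined and finitely additive, and the uniqueness claim via the $\pi$-$\lambda$ theorem is fine.

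There is, however, a genuine gap in the countable-additivity step. After choosing a compatible Polish topology on each $\sfX_k$ and compact sets $K_n\subset A_n$ by tightness, your ``standard diagonal compactness argument'' needs the bonding maps to be \emph{continuous}: given points $x^{(m)}\in\bigcap_{n\le m}\pr_{k_n}^{-1}(K_n)$ and a diagonal subsequence along which $\pr_{k_n}(x^{(m)})\to y_n\in K_n$ for every $n$, the compatibility $f_{k_{n'}+1}\circ\cdots\circ f_{k_n}\bigl(\pr_{k_n}(x^{(m)})\bigr)=\pr_{k_{n'}}(x^{(m)})$ only passes to the limit if these composites are continuous, whereas the hypothesis grants them merely Borel measurable; a Borel map need not commute with limits of convergent sequences, so the limit points $(y_n)_{n\in\N}$ need not assemble into a thread of $\lim_\leftarrow\sfX_\bullet$. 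The standard repair is to refine the topologies inductively before invoking tightness: using the fact that for any Borel map between standard Borel spaces one can pass to a finer Polish topology on the domain, with the same Borel sets, making the map continuous (Kechris, Thm.~13.11), one makes $f_2$ continuous into $\sfX_1$, then $f_3$ continuous into the refined $\sfX_2$, and so on; alternatively one Borel-embeds everything into compact metric spaces. With that adjustment your argument closes, but as written the key compactness step would fail.
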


\subsection{Spacetime geometry} Now we recall some basic notions of Lorentzian geometry. We refer to the classical textbook of  Beem--Ehrlich--Easley  \cite{beem-ehrlich-easley1996} for details.

\subsubsection{Spacetimes} Throughout, every smooth topological manifold $\mms$ is assumed to be Hausdorff, second-countable, connected, without boundary, and with dimension at least two. A \emph{Lorentzian metric} on $\mms$ is a smooth section $g$ of $(T^*)^{\otimes 2}\mms$ of constant signature $+,-,\dots,-$. Given  any $v\in T\mms$, we write $\smash{\vert v\vert := \sqrt{g(v,v)}}$ if the radicand is nonnegative.

Note that if $\Sigma\colon \mms \to\R$ is a smooth nowhere vanishing function, then the pointwise multiplication $\Sigma^2\,g$ is again a Lorentzian metric on $\mms$.

In addition, if $\iota\colon\mms\to\mms'$ forms a diffeomorphism between two smooth topological manifolds $\mms$ and $\mms'$ and $g'$ constitutes a Lorentzian metric on $\mms'$, the \emph{pull-back} $\iota^*g'$ is a Lorentzian metric on $\mms$  defined as follows for every $v,w\in T_x\mms$, where $x\in\mms$:
\begin{align*}
(\iota^*g')_x(v,w) := g'_{\iota(x)}(\rmd\iota_{x}\,v,\rmd\iota_x\,w).
\end{align*}

A tangent vector $v\in T\mms\setminus\{0\}$ will be called  \emph{timelike} if $g(v,v) > 0$, \emph{lightlike} provided  $g(v,v) = 0$,  \emph{causal} if it is timelike or lightlike, and  \emph{spacelike} if $g(v,v) < 0$. The zero vector is spacelike by convention. These adjectives define the so-called causal character of $v$. Correspondingly, a smooth vector field $X$ on $\mms$ is called timelike, light\-like, causal, or spacelike provided the evaluation $X_x$ has the respective causal character for every $x\in\mms$. In particular, this sets up the causal character of a  smooth curve $\gamma\colon I\to \mms$ defined on an interval $I\subset\R$ in terms of the causal character of the smooth vector field $\dot\gamma$ along $\gamma$.

A \emph{time orientation} refers to the choice of a continuous timelike vector field $T$ on $\mms$. We  call the triple $(\mms,g,T)$ --- or simply the pair $(\mms,g)$, where the time orientation $T$ is regarded as implicitly understood --- \emph{spacetime}. We will term $v\in T_x\mms$ \emph{future-directed} if $g(v,X_x) >0$, where $x\in\mms$. In an analogous  way, future-directed smooth vector fields and smooth curves are set up, respectively. Unless explicitly stated otherwise, all relevant objects will be assumed to be future-directed.

\subsubsection{Reference measures} Let $\vol_g$ be the usual volume measure on $\mms$ induced by $g$. We  also occasionally consider the weighted volume measure $\smash{\rme^V\,\vol_g}$, where $V\colon\mms\to\R$ is a continuous function. The subsequent result  summarizes standard properties of such measures, which follow from the Euclidean change of variables formula. 

\begin{lemma}[Transformation rules]\label{Le:Trafo} Let $(\mms,g)$ and $(\mms',g')$ be two given spacetimes. Let $V'\colon \mms'\to\R$ be continuous. Then the following identities hold as Borel measures on $\mms$.
\begin{enumerate}[label=\textnormal{(\roman*)}]
\item \textnormal{\textbf{Push-forwards vs.~pull-backs.}} If  $\smash{\iota\colon\mms\to\mms'}$ is a diffeomorphism,
\begin{align*}
\iota^{-1}_\push\big[\rme^{V'}\,\vol_{g'}\big] = \rme^{V'\circ\iota}\,\vol_{\iota^*g'}.
\end{align*}
\item \textnormal{\textbf{Conformal changes.}}  If $\Sigma\colon \mms\to\R$ is a smooth nowhere vanishing function,
\begin{align*}
\vol_{\Sigma^2\,g}=\Sigma^d\,\vol_g.
\end{align*}
\end{enumerate}
\end{lemma}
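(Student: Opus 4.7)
Both identities are local in nature, so the plan is to verify them in arbitrary positively oriented smooth coordinate charts on $\mms$ and then glue via a partition of unity. The guiding principle is that in a chart $(U,\varphi)$ with coordinates $x^1,\dots,x^d$, the volume measure $\vol_g$ is represented by the Borel density $\sqrt{|\det(g_{ij})|}$ against the push-forward of Lebesgue measure on $\varphi(U)\subset\R^d$; by this defining property, each assertion reduces to a purely Euclidean computation.

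For part (ii) the verification is a one-line algebraic check. In any chart one has $(\Sigma^2\,g)_{ij}=\Sigma^2\,g_{ij}$, and multilinearity of the determinant in its $d$ rows yields $\det((\Sigma^2\,g)_{ij})=\Sigma^{2d}\det(g_{ij})$. Taking absolute values and the positive square root --- and using that $\Sigma$ is smooth, nowhere vanishing, hence of constant sign on the connected manifold $\mms$ so that it may be assumed positive after replacing it by $|\Sigma|$ if necessary --- the density of $\vol_{\Sigma^2\,g}$ equals $\Sigma^d$ times the density of $\vol_g$ in every chart, and the identity follows.

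For part (i) I would test both sides against an arbitrary $f\in C_\comp(\mms)$. By the definition of push-forward,
\begin{align*}
\int_\mms f\d\iota^{-1}_\push\big[\rme^{V'}\,\vol_{g'}\big] = \int_{\mms'}(f\circ\iota^{-1})\,\rme^{V'}\d\vol_{g'}.
\end{align*}
Working in compatible charts $(U,\varphi)$ on $\mms$ and $(U',\varphi')$ on $\mms'$ with $\iota(U)\subset U'$, the Euclidean change-of-variables formula rewrites the right-hand side as the integral, against Lebesgue measure on $\varphi(U)$, of the quantity $f\,\rme^{V'\circ\iota}\sqrt{|\det(g'_{kl}\circ\iota)|}\,|\det(\Diff\iota)|$. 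The chain rule gives $(\iota^*g')_{ij}=(g'_{kl}\circ\iota)\,\partial_i\iota^k\,\partial_j\iota^l$, whence $\det((\iota^*g')_{ij})=\det(g'_{kl}\circ\iota)\,(\det(\Diff\iota))^2$; the integrand is therefore precisely $f\,\rme^{V'\circ\iota}$ times the local density of $\vol_{\iota^*g'}$. Since the chart was arbitrary, a partition of unity subordinate to a cover of $\mms$ by such compatible charts completes the proof.

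The main --- and essentially only --- technical point is careful bookkeeping with determinants, orientations, and the sign of the conformal factor; no genuinely Lorentzian input enters, since both statements are algebraic in the metric and reduce to the Euclidean change-of-variables formula, which applies verbatim to indefinite metrics once absolute values are used.
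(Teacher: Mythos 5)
Your proof is correct and is exactly the argument the paper has in mind: the paper offers no proof of this lemma beyond the remark that both identities ``follow from the Euclidean change of variables formula,'' and your chart-by-chart computation with $\sqrt{|\det(g_{ij})|}$ densities is the standard way to carry that out. Your side remark that one must take $\Sigma>0$ (otherwise the right-hand side of (ii) is $|\Sigma|^d\,\vol_g$) is a legitimate and worthwhile precision, harmless here since $\Sigma^2 g$ is unchanged under $\Sigma\mapsto|\Sigma|$.
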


\subsection{Causality theory}\label{Sub:Causthry} For details about the material about Lorentzian causality theory collected now, we refer to the review of Minguzzi \cite{minguzzi2019-causality}.

\subsubsection{Chronology and causality} First, we recall the two standard relations $\ll$ and $\leq$ of \emph{chronology} and \emph{causality}, respectively.  Given  $x,y\in \mms$, we write 
\begin{itemize}
\item $x\ll y$ if there exists a smooth timelike curve $\gamma\colon[0,1]\to\mms$ from $x$ to $y$ and
\item $x\leq y$ if there exists a smooth causal  curve $\gamma\colon[0,1]\to \mms$ connecting $x$ to $y$.
\end{itemize}
By these relations, one can define a merely continuous curve through $\mms$ to be timelike or causal, respectively, in the evident manner. Clearly, $\ll$ is contained in $\leq$. As well-known, both relations are transitive, $\leq$ is reflexive, and the push-up property holds: the composed relations $\ll \circ \leq$ and $\leq \circ \ll$ are both included in $\ll$; more precisely, for every $x,y,z\in\mms$ we have $x\ll z$ provided $x\leq y\ll z$ or $x\ll y\leq z$.  Given any  $x,y\in\mms$, we define the \emph{chronological future} of $x$  by $\smash{I^+(x) := \{z\in\mms : x\ll z\}}$ and  the \emph{chronological past} of $y$ by  $\smash{I^-(y):= \{z\in\mms : z\ll y\}}$.  
Analogous notions and notations will be adopted for the causality relation by replacing each occurrence of ``$I$'' and ``$\ll$'' by ``$J$'' and ``$\leq$'', respectively. Note that chronological futures and pasts are always open, whereas causal futures and pasts are in general not closed.

\subsubsection{Minguzzi's $D$-relations} There are many other relations one can define similarly. For our purposes, Minguzzi's $D$\footnote{This nomenclature comes from  the strong relation to the distinction property, cf.~\cite{minguzzi2019-causality}*{§4.3.2}.}-relations  \cite{minguzzi2008-ladder-i} will be useful. Given $x,y \in\mms$, we write
\begin{itemize}
\item $\smash{x\leq_\rmf y}$ if $\smash{y\in \cl\,J^+(x)}$ and
\item $\smash{x\leq_\rmp y}$ if $\smash{x\in \cl\,J^-(y)}$.
\end{itemize}
We have $\smash{x\leq_\rmf y}$ if and only if $\smash{I^+(x)\supset I^+(y)}$ as well as $\smash{x\leq_\rmp y}$ if and only if $\smash{I^-(x)\subset I^-(y)}$, cf. Minguzzi \cite{minguzzi2019-causality}*{p.~105}. Dowker--Garcia--Surya \cite{dowker-garcia-surya2000}*{Cl.~1} and Minguzzi \cite{minguzzi2008-ladder-i}*{Thm.~3.3} prove both relations are reflexive and transitive; \cref{Th:Antisy}  will  yield  a sufficient condition for its antisymmetry we use later. Lastly, as shown by Minguzzi \cite{minguzzi2008-weak-distinction}*{Lem. 2.7}, the push-up property holds. In fact, $\smash{\leq_\rmf}$ and $\smash{\leq_\rmp}$ are the largest relations --- with respect to the respective time orientation --- with this property, cf.~Minguzzi \cite{minguzzi2008-weak-distinction}*{Lem.~2.8}.

\begin{lemma}[Push-up property of Minguzzi's $D$-relations \cite{minguzzi2008-weak-distinction}]\label{Le:Push-up} The two composed relations $\smash{\ll \circ\leq_\rmf}$ and $\smash{\leq_\rmp \circ\ll}$ are contained in $\ll$. More precisely, for every $x,y,z\in\mms$ we have $x \ll z$ provided $\smash{x\leq_\rmf y \ll z}$ or $\smash{x\ll y\leq_\rmp z}$.
\end{lemma}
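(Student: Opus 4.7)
The plan is to reduce everything to the equivalent characterizations of Minguzzi's $D$-relations in terms of chronological futures and pasts that the excerpt has already recorded just above the lemma, namely
\begin{align*}
x\leq_\rmf y \iff I^+(x)\supset I^+(y), \qquad x\leq_\rmp y \iff I^-(x)\subset I^-(y).
\end{align*}
Once these two equivalences are in force, both composed inclusions become one-line chains of set inclusions, so no extra causality machinery (e.g.\ limit curves or the push-up property for $\leq$) has to be invoked.

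For the first inclusion $\ll\circ\leq_\rmf\subset\ll$, I would take $x,y,z\in\mms$ with $x\leq_\rmf y\ll z$. The assumption $y\ll z$ is, by definition, the statement $z\in I^+(y)$. Combining this with $I^+(y)\subset I^+(x)$, which is the equivalent form of $x\leq_\rmf y$, yields $z\in I^+(x)$, i.e.\ $x\ll z$ as desired.

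For the second inclusion $\leq_\rmp\circ\ll\subset\ll$, I would take $x,y,z\in\mms$ with $x\ll y\leq_\rmp z$. Here $x\ll y$ reads as $x\in I^-(y)$, while $y\leq_\rmp z$ is equivalent to $I^-(y)\subset I^-(z)$. Chaining these two facts gives $x\in I^-(z)$, which is $x\ll z$.

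Neither step presents a genuine obstacle: the only conceptual content lies in the two characterizations recalled above (themselves quoted from \cite{minguzzi2019-causality}), and the argument is a pure substitution of definitions. If any subtlety arises, it is merely checking that the equivalences used are literally those stated in the paper, so that the proof can be presented without re-proving them.
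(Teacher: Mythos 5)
Your argument is correct. The paper itself does not prove this lemma---it simply cites Minguzzi \cite{minguzzi2008-weak-distinction}*{Lem.~2.7}---so there is no internal proof to compare against; but your derivation is a legitimate, self-contained justification given what the paper records immediately before the statement. Indeed, from $x\leq_\rmf y$ one gets $I^+(y)\subset I^+(x)$, and $y\ll z$ means $z\in I^+(y)$, whence $z\in I^+(x)$, i.e.\ $x\ll z$; the dual chain handles $x\ll y\leq_\rmp z$ via $I^-(y)\subset I^-(z)$. The only point worth being explicit about is that the two equivalences you invoke (stated in the paper with reference to \cite{minguzzi2019-causality}*{p.~105}) are themselves proved using the classical push-up property for $\leq$ and $\ll$, not the one for the $D$-relations, so no circularity arises. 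Your proof is therefore a clean reduction of the cited fact to material already on the page.
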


\subsubsection{Some causality conditions} Many different conditions of varying strengths can be imposed on these relations and their extensions. This leads to the vast field of Lorentzian causality theory, which is systematically  described in a unified way by Minguzzi \cite{minguzzi2019-causality}, to where we refer for details. We focus only on the conditions needed below. 

\begin{definition}[Distinction]\label{Def:Distinction} A spacetime $(\mms,g)$ will be called 
\begin{enumerate}[label=\textnormal{\alph*.}]
\item \emph{future distinguishing} if for every $x,y\in\mms$, we have $\smash{I^+(x)\neq I^+(y)}$ if $x\neq y$,
\item \emph{past distinguishing} if for every $x,y\in\mms$, we have $\smash{I^-(x)\neq I^-(y)}$ if $x\neq y$, and
\item \emph{distinguishing} if it is simultaneously future and past distinguishing.
\end{enumerate}
\end{definition}

That is, a spacetime $(\mms,g)$ is distinguishing if and only if for every $x,y\in\mms$, we have $x=y$ if and only if $\smash{I^+(x) = I^+(y)}$ or $\smash{I^-(x) = I^-(y)}$.

\begin{definition}[Reflectivity]\label{Def:Reflectivity} A spacetime $(\mms,g)$ is termed
\begin{enumerate}[label=\textnormal{\alph*.}]
\item \emph{future reflecting} if for every $x,y\in\mms$,  $\smash{I^-(x)\subset I^-(y)}$ implies $\smash{I^+(y) \subset I^+(x)}$,
\item \emph{past reflecting} if for every $x,y\in\mms$, $\smash{I^+(x)\subset I^+(y)}$ implies $\smash{I^-(y)\subset I^-(x)}$, and
\item \emph{reflecting} if it is simultaneously future and past reflecting.
\end{enumerate}
\end{definition}

\begin{definition}[Causal continuity]\label{Def:Causalcontinuity} A distinguishing and reflecting spacetime is called \emph{causally continuous}.
\end{definition}

The notion of reflectivity was introduced by Kronheimer--Penrose \cite{kronheimer-penrose1967}. Distinction and causal continuity stem from Hawking--Sachs \cite{hawking-sachs1974}. An equivalent yet a priori weaker notion for causal continuity was proposed by Minguzzi \cite{minguzzi2008-weak-distinction}. There are many equivalent characterizations of distinction and reflectivity; see Minguzzi \cite{minguzzi2019-causality}*{§§4.1, 4.3} for details.

Let us collect some basic facts. On every dist\-inguishing space\-time,  the chronological  relation $\ll$ is irreflexive, hence a strict partial order, cf.~Minguzzi  \cite{minguzzi2019-causality}*{Tab.~1}.

\begin{proposition}[Antisymmetry of Minguzzi's $D$-relations  {\cite{minguzzi2008-ladder-i}*{Thm.~3.3}}]\label{Th:Antisy} The $D$-relations $\smash{\leq_\rmf}$ and $\smash{\leq_\rmp}$ on a distinguishing spacetime are  antisymmetric, hence partial orders.
\end{proposition}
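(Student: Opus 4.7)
The plan is to leverage the alternative characterizations of the $D$-relations that were recorded in the paragraph preceding \cref{Th:Antisy}, namely
\begin{align*}
x\leq_\rmf y \iff I^+(x)\supset I^+(y), \qquad x\leq_\rmp y \iff I^-(x)\subset I^-(y).
\end{align*}
Given these equivalences, reducing antisymmetry to the hypothesis of distinction is almost mechanical. Reflexivity and transitivity are already known from \cite{dowker-garcia-surya2000,minguzzi2008-ladder-i}, so only antisymmetry has to be verified in order to conclude that $\leq_\rmf$ and $\leq_\rmp$ are partial orders.

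For the forward $D$-relation, I would argue as follows. Suppose $x,y\in\mms$ satisfy $x\leq_\rmf y$ and $y\leq_\rmf x$ simultaneously. Applying the equivalence above in both directions, one obtains $I^+(x)\supset I^+(y)$ and $I^+(y)\supset I^+(x)$, hence $I^+(x)=I^+(y)$. Since $(\mms,g)$ is distinguishing, it is in particular future distinguishing in the sense of \cref{Def:Distinction}; the contrapositive of that condition then forces $x=y$. This establishes antisymmetry of $\leq_\rmf$.

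For $\leq_\rmp$, I would proceed symmetrically, replacing chronological futures by chronological pasts. Assume $x\leq_\rmp y$ and $y\leq_\rmp x$; the second characterization yields $I^-(x)\subset I^-(y)$ as well as $I^-(y)\subset I^-(x)$, so $I^-(x)=I^-(y)$. Past distinction of $(\mms,g)$, which is also part of being distinguishing, then gives $x=y$. Combining both antisymmetry statements with reflexivity and transitivity completes the proof.

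Since all the nontrivial input --- the characterizations of $\leq_\rmf$ and $\leq_\rmp$ in terms of inclusions of chronological futures/pasts, as well as their reflexivity and transitivity --- is imported from the literature, there is no real obstacle; the only subtlety is to match the correct half of the distinction hypothesis with the relevant $D$-relation. No appeal to reflectivity or to the push-up property from \cref{Le:Push-up} is needed here.
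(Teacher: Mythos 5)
Your argument is correct. The paper itself offers no proof of this proposition --- it is imported wholesale from Minguzzi \cite{minguzzi2008-ladder-i}*{Thm.~3.3} --- but the derivation you give is exactly the standard one underlying that citation: combine the inclusion characterizations $x\leq_\rmf y \Leftrightarrow I^+(x)\supset I^+(y)$ and $x\leq_\rmp y \Leftrightarrow I^-(x)\subset I^-(y)$ (recorded in the paper just before the statement, with reference to \cite{minguzzi2019-causality}*{p.~105}) with future, respectively past, distinction. You also correctly pair each half of the distinction hypothesis with the corresponding $D$-relation, and you are right that neither reflectivity nor the push-up property of \cref{Le:Push-up} is needed.
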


In view of the following definition,  we  call a sequence $(x_n)_{n\in\N}$ in a given spacetime $(\mms,g)$ \emph{chronologically increasing} if $x_n\ll x_{n+1}$ for every $n\in\N$, respectively. Then the \emph{chronological  decreasingness} of $(x_n)_{n\in\N}$ is defined analogously.

\begin{definition}[Chronocompleteness]\label{Def:RelF} We call a spacetime $(\mms,g)$ 
\begin{enumerate}[label=\textnormal{\alph*.}]
\item \emph{future chronocomplete} if every chronologically increasing sequence $(x_n)_{n\in\N}$ in $\mms$ such that there is $\smash{x^+\in \mms}$  with $\smash{x_n\ll x^+}$ for every $n\in\N$ converges,
\item \emph{past chronocomplete} if every chronologically decreasing sequence $(x_n)_{n\in\N}$ in $\mms$ such that there is $\smash{x^-\in \mms}$ with $\smash{x^-\ll x_n}$ for every $n\in\N$ converges, and
\item \emph{chronocomplete} if it is future and past chronocomplete.
\end{enumerate}
\end{definition}

The previous new notion does not appear in Minguzzi's review \cite{minguzzi2019-causality}. It constitutes a relaxation of the notion of (forward or backward) complete spacetimes which originates in our the recent work with Beran et al.~\cite{beran-braun-calisti-gigli-mccann-ohanyan-rott-samann+-}*{Def.~2.1} and was subsequently elaborated on by Gigli \cite{gigli+} in the context  of general partially ordered sets. Their notion mimics the notion of completeness of metric spaces in Lorentzian signature (much in the spirit of the two different yet equivalent concepts of Cauchy and Dedekind completeness of $\R$). However, note carefully we only ask for convergence of sequences that increase chronologically (as opposed to causally). In addition, we require the ``upper and lower bounds''  $\smash{x^+}$ and $\smash{x^-}$ in \cref{Def:RelF} to belong to the spacetime.

A large class of examples of distinguishing and chronocomplete spacetimes are globally hyperbolic ones. Its subsequent formulation follows the Bernal--Sánchez characterization \cite{bernal-sanchez2007} of the classical notion of global hyperbolicity, cf.~e.g.~Hawking--Ellis \cite{hawking-ellis1973}.

\begin{definition}[Global hyperbolicity] A spacetime $(\mms,g)$ is called \emph{globally hyperbolic} if 
\begin{enumerate}[label=\textnormal{\alph*.}]
\item the causality relation $\leq$ is antisymmetric and 
\item the causal diamond $\smash{I^+(x)\cap I^-(y)}$ is compact for every points $x,y\in\mms$.
\end{enumerate}
\end{definition}

\begin{lemma}[Implications of global hyperbolicity]\label{Le:GHtoFC} Every globally hyperbolic space\-time   is causally continuous and chronocomplete.
\end{lemma}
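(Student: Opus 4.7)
The plan is to address the two conclusions separately. For causal continuity, the implication is a classical step in the causal ladder: globally hyperbolic spacetimes are causally simple in the sense that $\leq$ is antisymmetric and each $J^\pm(x)$ is closed, and causally simple spacetimes are in particular distinguishing and reflecting. I would therefore invoke Minguzzi's review \cite{minguzzi2019-causality} for this part without further argument.

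For future chronocompleteness, let $(x_n)_{n\in\N}$ be a chronologically increasing sequence and assume there is $x^+\in\mms$ with $x_n\ll x^+$ for every $n$. Transitivity of $\ll$ yields $x_1\ll x_n\ll x^+$ for all $n\geq 2$, so the entire tail of $(x_n)$ lies in the causal diamond $J^+(x_1)\cap J^-(x^+)$, which is compact by global hyperbolicity. In particular, the sequence is precompact and admits at least one subsequential limit.

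The key step is to show that any two subsequential limits coincide. Suppose $x_{n_k}\to x$ and $x_{n_{k'}}\to y$ along two subsequences. For fixed $k$ and $n_{k'}>n_k$ one has $x_{n_k}\leq x_{n_{k'}}$, so $x_{n_{k'}}\in J^+(x_{n_k})$. Because $J^+(x_{n_k})$ is closed (causal simplicity), passing to the limit $k'\to\infty$ yields $y\in J^+(x_{n_k})$, i.e.\ $x_{n_k}\in J^-(y)$. Closedness of $J^-(y)$ combined with $k\to\infty$ then gives $x\in J^-(y)$, hence $x\leq y$. Symmetrically $y\leq x$, and antisymmetry of $\leq$ (the causality condition built into global hyperbolicity) forces $x=y$. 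Precompactness together with uniqueness of all subsequential limits implies convergence of the whole sequence. Past chronocompleteness follows from the time-reversed version of the same argument.

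The main obstacle is really only a careful execution of the subsequence argument: once one has at hand the two ingredients supplied by global hyperbolicity, namely compactness of causal diamonds and closedness of $J^\pm(x)$, the conclusion becomes routine. No new causality-theoretic input appears to be necessary, and the argument in particular does not require chronology-related refinements such as push-up through the $D$-relations.
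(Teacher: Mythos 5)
Your proposal is correct, and for the chronocompleteness half it is genuinely more self-contained than what the paper does. The paper disposes of the whole lemma by citation: causal continuity via the causal ladder in Minguzzi's review \cite{minguzzi2019-causality}, and chronocompleteness by appealing to Beran et al.~\cite{beran-braun-calisti-gigli-mccann-ohanyan-rott-samann+-}*{Rem.~2.10}, which asserts that globally hyperbolic spacetimes are forward and backward complete in their (stronger, causally-increasing) sense, of which chronocompleteness is a weakening. You instead prove chronocompleteness directly: trap the tail of the sequence in a compact causal diamond, extract subsequential limits, and identify them using closedness of $J^{\pm}$ (causal simplicity) together with antisymmetry of $\leq$. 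This argument is sound --- the passage from ``compact ambient set plus unique subsequential limit'' to convergence is valid since the manifold is metrizable --- and it has the virtue of making explicit exactly which three consequences of global hyperbolicity are consumed (compact diamonds, closed causal futures and pasts, antisymmetric causality), none of which involve the $D$-relations, exactly as you note. Two small remarks: first, you still lean on a citation for causal simplicity and for the causal-continuity half, so the proof is not citation-free, merely differently sourced; second, the paper's stated definition of global hyperbolicity reads ``$I^{+}(x)\cap I^{-}(y)$ compact,'' which taken literally is vacuous for an open set in a connected manifold --- you (reasonably) work with the standard $J^{+}(x)\cap J^{-}(y)$ instead, which is surely what the authors intend.
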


Indeed, global hyperbolicity is well-known to be  stronger than causal continuity, cf. Minguzzi \cite{minguzzi2019-causality}*{Fig.~20}. As noted by Beran et al.~\cite{beran-braun-calisti-gigli-mccann-ohanyan-rott-samann+-}*{Rem.~2.10}, every globally hyperbolic spacetime is forward and backward complete, hence chronocomplete.

\subsection{Hawking--King--Malament--McCarthy theory}\label{Sub:HKMM} We now recall some fundamental  contributions of Malament \cite{malament1977} based on previous work of Hawking--King--McCarthy \cite{hawking-king-mccarthy1976}. In a nutshell, they gave criteria on the spacetimes in question under which a map which preserves distances \cite{hawking-king-mccarthy1976} or orders \cite{malament1977} upgrades to a smooth isometry. Among these classes of maps, we will focus on  the two extreme ``regularities'', i.e.~metric- (possibly modulo conformal change) and order-preserving ones. 

\begin{definition}[Isometry]\label{Def:Isometry} We call a bijective map $\iota\colon\mms\to \mms'$
\begin{enumerate}[label=\textnormal{\alph*.}]
\item \emph{smooth isometry} if $\iota$ is a diffeomorphism which satisfies $\smash{\iota^*g'=g}$,
\item \emph{smooth conformal isometry} if $\iota$ is a diffeomorphism and there is a smooth  nowhere vanishing function $\Sigma\colon \mms'\to\R$ such that $\smash{\iota^*g' = \Sigma^2\,g}$.
\end{enumerate}
\end{definition}


\begin{definition}[Chronology-preservation]\label{Def:Chronpres} We  call a map $\iota\colon\mms\to\mms'$  
\begin{enumerate}[label=\textnormal{\alph*.}]
\item  \emph{chronology-preserving} if $x\ll y$ if and only if $\iota(x) \ll'\iota(y)$ for every $x,y\in\mms$,
\item  \emph{chronological isomorphism} if it is chronology-preserving and bijective.
\end{enumerate}
\end{definition}

A  chronological isomorphism is nothing but an isomorphism of the strict partial orders given by the respective chronological relations.

All the above properties are clearly stable under composition. 

\begin{theorem}[Malament's theorem \cite{malament1977}*{Thm.~1}]\label{Th:Malament} Assume that $(\mms,g)$ and $(\mms',g')$ are two distinguishing spacetimes. Let $\iota\colon \mms\to \mms'$ be a chronological isomorphism. Then $\iota$ constitutes a homeomorphism and in fact a smooth conformal isometry. 
\end{theorem}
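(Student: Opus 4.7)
The plan is to pass from an order isomorphism to a homeomorphism and finally to a smooth conformal isometry, in three layers of increasing regularity. As a preliminary observation, chronology-preservation and bijectivity immediately give $\iota(I^+(x)) = I'^+(\iota(x))$ and $\iota(I^-(x)) = I'^-(\iota(x))$ for every $x \in \mms$, together with the analogous identities for $\iota^{-1}$. Hence every piece of information that can be read off from chronological futures and pasts --- in particular the relations $\leq_{\rmf}$ and $\leq_{\rmp}$ whose antisymmetry is recorded in \cref{Th:Antisy} --- is automatically transported by $\iota$.

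For the continuity step, I would recover the manifold topology from the chronological relation on a distinguishing spacetime. A convenient intrinsic description is that a continuous curve $\gamma$ is future-directed timelike precisely when $\gamma(s) \ll \gamma(t)$ whenever $s < t$; this is purely chronological and hence preserved by $\iota$ together with $\iota^{-1}$. In a distinguishing spacetime each point $x$ admits an arbitrarily small neighborhood basis built from suitable intersections of chronological futures and pasts of points taken along a continuous timelike curve through $x$, a fact that can be extracted from the equivalences recorded in Minguzzi \cite{minguzzi2019-causality}. Applying $\iota$ to such a basis at $x$ produces an analogous basis at $\iota(x) \in \mms'$, so both $\iota$ and $\iota^{-1}$ are continuous and $\iota$ is a homeomorphism.

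For the conformal step, the strategy is to identify null pregeodesics intrinsically and then invoke a classical result. A continuous future-directed causal curve is a null pregeodesic precisely when no two of its points are chronologically related; combining continuity of $\iota$ with the chronological characterization of continuous timelike curves (and hence of causal curves, as limits of timelike ones), one sees that $\iota$ maps unparametrized null geodesics of $(\mms,g)$ bijectively to those of $(\mms',g')$. At this point I would invoke the theorem of Hawking--King--McCarthy \cite{hawking-king-mccarthy1976}: a homeomorphism between spacetimes of dimension at least three that carries unparametrized null geodesics to unparametrized null geodesics is automatically a smooth conformal isometry.

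The principal obstacle is the continuity step, because distinction is strictly weaker than strong causality and the Alexandrov topology generated by chronological diamonds $I^+(x) \cap I^-(y)$ may be strictly coarser than the manifold topology. The technical crux is therefore to encode local neighborhoods in a distinguishing spacetime through purely chronological data, leveraging the full strength of distinction rather than just the partial-order axioms on $\ll$. Once this refined topological reconstruction is in hand, the remaining passage from homeomorphism to smooth conformal isometry is a direct appeal to the Hawking--King--McCarthy machinery.
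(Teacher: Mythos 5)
The paper does not actually prove \cref{Th:Malament}: it quotes it from Malament \cite{malament1977} and only records, following Malament's footnote 4, that the conformal part is ``homeomorphy plus a minor modification of \cite{hawking-king-mccarthy1976}*{Thm.~5}''. Your overall architecture --- establish a homeomorphism first, then upgrade via Hawking--King--McCarthy using preservation of unparametrized null geodesics (equivalently, of achronal causal curves) --- is the intended one, and your last step is fine once the homeomorphism is in hand.

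The continuity step, however, is where the entire difficulty of the theorem sits, and your treatment of it has a genuine gap. First, the claimed characterization ``a continuous curve $\gamma$ is future-directed timelike precisely when $\gamma(s)\ll\gamma(t)$ for all $s<t$'' is not available: the definition of a continuous timelike curve is local (chronology inside convex neighborhoods), and the global chord condition only implies the local one when small neighborhoods are causally convex, i.e.\ essentially under strong causality --- exactly what a merely distinguishing spacetime lacks. Second, even granting that characterization, it is not ``purely chronological'': it presupposes that the curve is \emph{continuous}, so to conclude that $\iota\circ\gamma$ is a continuous timelike curve you would already need to know that $\iota\circ\gamma$ is continuous, which is the statement under proof. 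The same circularity affects the neighborhood-basis plan: the images $\iota\bigl(I^+(p)\cap I^-(q)\bigr)=I'^+(\iota(p))\cap I'^-(\iota(q))$ are again chronological diamonds, but in a distinguishing, non--strongly causal spacetime such sets need not form a neighborhood basis (the Alexandrov topology is strictly coarser), and you name this obstacle without resolving it. The missing ingredient is Malament's Lemma~3, a variant of which appears in this paper as \cref{Le:PresTL}: one proves \emph{directly} that $\iota\circ\gamma$ is continuous at each parameter, by choosing around $\iota(\gamma_{t_0})$ a neighborhood $U'$ with the distinguishing property that timelike curves issued from $\iota(\gamma_{t_0})$ and ending in $U'$ stay in $U'$ (Minguzzi \cite{minguzzi2019-causality}*{Thm.~4.44}), picking $y$ with $\iota(\gamma_{t_0})\ll'\iota(y)\in U'$, and trapping $\iota(\gamma_t)$ in $U'$ for $t$ near $t_0$; only after this does \cite{hawking-king-mccarthy1976}*{Thm.~5} apply. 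Your proposal should replace the ``intrinsic description'' paragraph by an argument of this type.
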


As Malament noted \cite{malament1977}*{Fn.~4}, the  above  property of smooth conformal isometry from \cref{Th:Malament} is in fact a direct consequence of homeomorphy plus a minor modification of the Hawking--King--McCarthy theorem \cite{hawking-king-mccarthy1976}*{Thm.~5}. 

The following is a simple consequence of distinction we will use below.

\begin{lemma}[Injectivity]\label{Le:Inj} Assume $(\mms,g)$ and $(\mms',g')$ are distinguishing spacetimes. Let $\iota\colon \mms\to \mms'$ be chronology-preserving. Then $\iota$ is injective.
\end{lemma}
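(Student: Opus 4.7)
The plan is to exploit the ``if and only if'' in the definition of chronology-preservation (\cref{Def:Chronpres}) to show that any two points of $\mms$ mapped by $\iota$ to the same image in $\mms'$ must have identical chronological futures (and, symmetrically, identical chronological pasts), whence distinction (\cref{Def:Distinction}) forces them to coincide.

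Concretely, suppose $x,y\in\mms$ satisfy $\iota(x)=\iota(y)$. Pick an arbitrary $z\in\mms$. Using that $\iota$ is chronology-preserving, I would argue
\begin{align*}
x\ll z \iff \iota(x)\ll'\iota(z) \iff \iota(y)\ll'\iota(z) \iff y\ll z,
\end{align*}
where the middle equivalence uses the hypothesis $\iota(x)=\iota(y)$ and the outer two use \cref{Def:Chronpres}. Hence $I^+(x)=I^+(y)$. Since $(\mms,g)$ is in particular future distinguishing, this forces $x=y$, proving injectivity. (Alternatively, the same argument applied to the relation $z\ll x$ shows $I^-(x)=I^-(y)$, so past distinction would suffice equally well; either half of the distinction hypothesis is enough.)

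There is no genuine obstacle here: the statement is essentially a reformulation of distinction in the language of chronology-preserving maps, and the only subtlety to flag is that one needs the biconditional in \cref{Def:Chronpres} (a merely forward-preserving map would not suffice). I would keep the proof to a few lines and cite \cref{Def:Distinction,Def:Chronpres} explicitly.
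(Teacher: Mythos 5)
Your proof is correct and essentially the contrapositive of the paper's argument: the paper starts from $x\neq y$, uses distinction of $(\mms,g)$ to produce a witness $z$ in one of the four difference sets $I^\pm(x)\setminus I^\pm(y)$, $I^\pm(y)\setminus I^\pm(x)$, and pushes it forward via the biconditional in \cref{Def:Chronpres} to separate $\iota(x)$ from $\iota(y)$, whereas you start from $\iota(x)=\iota(y)$ and pull back equality of chronological futures. Both rest on exactly the same two ingredients, and your formulation has the minor advantage of avoiding the four-case split and making explicit that future (or past) distinction of the source alone suffices, with no hypothesis needed on $(\mms',g')$.
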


\begin{proof} Suppose $x,y\in\mms$ are distinct. Since  $(\mms,g)$ is distinguishing, one of the  following four sets is nonempty: $\smash{I^+(x) \setminus I^+(y)}$, $\smash{I^+(y) \setminus I^+(x)}$, $\smash{I^-(x) \setminus I^-(y)}$, or $\smash{I^-(y) \setminus I^-(x)}$. We assume nonemptiness of the first; the proof is analogous in the other situations.  Thus, let $\smash{z\in I^+(x) \setminus I^+(y)}$. Since $\iota$ is chronology-preserving, we have $\iota(x) \ll \iota(z)$ yet $\smash{\iota(y) \not\ll\iota(z)}$. This implies $\smash{I'^+(\iota(x))\setminus I'^+(\iota(y))}$ is nonempty. As $(\mms',g')$ is distinguishing, we conclude the images $\iota(x),\iota(y)\in\mms'$ must be distinct.
\end{proof}

\section{Proofs of \cref{Th:Main,Th:Weighted}}\label{Sub:Prooof}

\subsection{Extension of chronology-preserving maps} In the proof of \cref{Th:Main}, we will build countable dense sets from the two spacetimes in question. By assigning the points in an appropriate way, the induced map will be chronology-preserving. Developing the mechanism to extend this map beyond the  dense set in a unique chronology-preserving way is the objective of this part. As we believe the results are of independent interest, we have separated them from the actual proof of \cref{Th:Main}.

We flag a crucial property we assume throughout below: density of the image of the initial map to be extended. Later, this will be given by construction, cf.~\cref{Le:Dens}.

For the proofs to follow, recall Minguzzi's $D$-relations $\smash{\leq_\rmf}$ and $\smash{\leq_\rmp}$ \cite{minguzzi2008-ladder-i}  from  \cref{Sub:Causthry}.

\begin{proposition}[Uniqueness of extension]\label{Pr:Uniq} Let $(\mms,g)$ and $(\mms',g')$ be two spacetimes, where $(\mms',g')$ is distinguishing, past reflecting, and future  chronocomplete. Given any  dense set $D\subset\mms$, suppose $\iota\colon D\to \mms'$ is chronology-preserving with dense image. Then every  extension $\smash{\tilde{\iota}\colon\mms\to\mms'}$ of $\iota$ has the following property  for every $x\in\mms$ and every chronologically increasing sequence $(x_i)_{i\in\N}$ in $D$ converging to $x$:
\begin{align}\label{Eq:Equ!}
\tilde{\iota}(x) = \lim_{i\to\infty} \iota(x_i).
\end{align}

In particular, there exists at most one extension of $\iota$  to all of $\mms$.
\end{proposition}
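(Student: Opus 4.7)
The plan is to prove \eqref{Eq:Equ!} in two stages: first build the candidate limit $y' := \lim_i \iota(x_i)$, then identify it with $\tilde{\iota}(x)$. For existence of $y'$, since $D$ is dense in $\mms$ and $I^+(x)$ is a nonempty open set, I pick $z_0 \in I^+(x)\cap D$; openness of $I^-(z_0)$ together with $x_i\to x$ forces $x_i\ll z_0$ for $i$ large, so $\iota(x_i)\ll' \iota(z_0)$ by chronology-preservation. The tail of $(\iota(x_i))$ is then chronologically increasing in $(\mms',g')$ with chronological upper bound $\iota(z_0)$, and future chronocompleteness of $(\mms',g')$ yields a limit $y'\in\mms'$.

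To identify $y' = \tilde{\iota}(x)$, I would show $I'^-(y') = I'^-(\tilde{\iota}(x))$ and then invoke past distinguishing of $(\mms',g')$. For the inclusion $I'^-(\tilde{\iota}(x)) \subset I'^-(y')$, given $a' \ll' \tilde{\iota}(x)$, density of $\iota(D)$ furnishes $w \in D$ with $a' \ll' \iota(w) \ll' \tilde{\iota}(x)$; chronology-preservation of $\tilde{\iota}$ gives $w \ll x$, hence $x_i \in I^+(w)$ eventually and $\iota(w) \ll' \iota(x_i)$ for $i$ large. Passing $i\to\infty$ yields $\iota(w) \leq'_\rmf y'$, and \cref{Le:Push-up} applied to $a' \ll' \iota(w) \leq'_\rmf y'$ delivers $a' \ll' y'$. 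For the reverse inclusion I would first prove the dual future-set statement $I'^+(\tilde{\iota}(x)) \subset I'^+(y')$ by a symmetric argument: given $z' \gg' \tilde{\iota}(x)$, find $w \in D$ with $\tilde{\iota}(x) \ll' \iota(w) \ll' z'$, so $x \ll w$, giving $\iota(x_i) \ll' \iota(w)$ for $i$ large and $y' \leq'_\rmp \iota(w)$ in the limit; \cref{Le:Push-up} on $y' \leq'_\rmp \iota(w) \ll' z'$ then yields $y' \ll' z'$. Past reflecting of $(\mms',g')$ converts the future-set containment $I'^+(\tilde{\iota}(x)) \subset I'^+(y')$ into the past-set containment $I'^-(y') \subset I'^-(\tilde{\iota}(x))$, closing the equality.

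The ``at most one extension'' conclusion follows by observing that every $x \in \mms$ admits some chronologically increasing sequence in $D$ converging to it (obtained by approximating points on a past-directed timelike curve at $x$ via $D$ using openness of chronology); consequently any two chronology-preserving extensions must coincide at $x$ by \eqref{Eq:Equ!}.

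The main obstacle is that chronological increasingness of $(x_i)$ and $x_i\to x$ do not entail $x_i \ll x$, which forecloses a direct transitive comparison between $\iota(x_i)$ and $\tilde{\iota}(x)$: limits of chronological relations in $(\mms',g')$ naturally generate only the weaker Minguzzi $D$-relations $\leq'_\rmf$ and $\leq'_\rmp$ rather than genuine $\ll'$. The push-up property \cref{Le:Push-up} is the key device that promotes these relaxed relations back to true chronology at both endpoints of each chain, and past reflecting is precisely the hypothesis that converts the future-set inclusion into the past-set inclusion needed for past distinguishing to pin down $y'$.
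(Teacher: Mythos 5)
Your construction of $y'$ via future chronocompleteness and your first inclusion $I'^-(\tilde{\iota}(x))\subset I'^-(y')$ are essentially sound; the latter is in fact a nice alternative to the paper's argument, which runs a timelike curve through $x$ and invokes \cref{Le:PresTL}, whereas your density argument avoids that lemma. One caveat already there: the chain $a'\ll'\iota(w)\leq_\rmf' y'$ is \emph{not} literally an instance of \cref{Le:Push-up}, which only covers $\leq_\rmf$ \emph{followed by} $\ll$ and $\ll$ \emph{followed by} $\leq_\rmp$. You must first upgrade $\iota(w)\leq_\rmf' y'$ to $\iota(w)\leq_\rmp' y'$; this is harmless because past reflectivity of $(\mms',g')$ is exactly the inclusion $\leq_\rmf'\,\subset\,\leq_\rmp'$, but it should be said.

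The genuine gap is in your second inclusion. The composition $y'\leq_\rmp'\iota(w)\ll' z'$ is likewise \emph{not} covered by \cref{Le:Push-up}, and this time no repair is available from the hypotheses: from $y'\in\cl\,J'^-(\iota(w))$ and $\iota(w)\ll' z'$ one only obtains $y'\in\cl\,I'^-(z')$, not $y'\ll' z'$, and promoting $\leq_\rmp'$ to $\leq_\rmf'$ at that spot would require \emph{future} reflectivity of $(\mms',g')$, which is not assumed. Hence $I'^+(\tilde{\iota}(x))\subset I'^+(y')$ is not established and the identification of $y'$ with $\tilde{\iota}(x)$ is incomplete. (This is precisely the direction the paper gets for free from $\iota(x_i)\ll'\tilde{\iota}(x)$, i.e.\ from $x_i\ll x$, which you deliberately avoid.) The step can be repaired within your framework by comparing $\tilde{\iota}(x)$ with the $\iota(x_i)$ rather than with $y'$: given $z'\gg'\tilde{\iota}(x)$, pick $w\in D$ with $\tilde{\iota}(x)\ll'\iota(w)\ll' z'$, so $x\ll w$; since $x\in\cl\,I^+(x_i)$, i.e.\ $x_i\leq_\rmf x$ in $\mms$, the \emph{valid} push-up $x_i\leq_\rmf x\ll w$ gives $x_i\ll w$, hence $z'\in I'^+(\iota(x_i))$. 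Thus $\iota(x_i)\leq_\rmf'\tilde{\iota}(x)$, so $\iota(x_i)\leq_\rmp'\tilde{\iota}(x)$ by past reflectivity, and letting $i\to\infty$ in the closed set $\cl\,J'^-(\tilde{\iota}(x))$ yields $y'\leq_\rmp'\tilde{\iota}(x)$, which together with your first inclusion and antisymmetry of $\leq_\rmp'$ (\cref{Th:Antisy}) closes the argument.
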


\begin{proof} The last statement about uniqueness is clear from \eqref{Eq:Equ!}.

Now let $\smash{\tilde{\iota}\colon\mms\to\mms'}$ be an  extension of $\iota$ to all of $\mms$. Let $(x_i)_{i\in\N}$ be a chronologically increasing sequence in $D$ converging to $x$. By chronology-preservation of the extension, $(\iota(x_i))_{i\in\N}$ is a chronologically increasing sequence obeying $\smash{\iota(x_i)\ll'\tilde{\iota}(x)}$ for every $i\in\N$. Therefore, future chronocompleteness of $(\mms',g')$ yields the existence of a limit $y'\in\mms'$ of  $(\iota(x_i))_{i\in\N}$. We claim that $\smash{y' = \tilde{\iota}(x)}$. The inclusion $\smash{\iota(x_i) \ll' \tilde{\iota}(x)}$ for every $i\in\N$ gives that $\smash{y' \leq_\rmp' \tilde{\iota}(x)}$. On the other hand, let $\gamma\colon (0,1)\to\mms$ be a smooth timelike curve with midpoint $x$. \cref{Le:PresTL} below shows $\smash{\tilde{\iota}\circ\gamma}$ is a continuous timelike curve through  $\smash{\tilde{\iota}(x)}$. Given any $\varepsilon\in (0,1/2)$, we then have $\gamma_{1/2-\varepsilon} \ll x_i$ for every large enough $i\in\N$, and thus $\smash{\tilde{\iota}(\gamma_{1/2-\varepsilon}) \ll' \iota(x_i)}$. This yields $\smash{y' \in \cl\,J'^+(\gamma_{1/2-\varepsilon})}$ and therefore $\smash{\gamma_{1/2-\varepsilon} \in \cl\,J'^-(y')}$ using   past reflectivity of $(\mms',g')$. By closedness, continuity of $\iota\circ\gamma$, and the arbitrariness of $\varepsilon$, this yields $\smash{\tilde{\iota}(x) \in\cl\,J'^-(y')}$ and thus  $\smash{\tilde{\iota}(x) \leq_\rmp' y'}$. Since the relation $\smash{\leq_\rmp'}$ is antisymmetric by the distinction property, the claim is proven.
\end{proof}

The subsequent result employed above is a variant of Malament \cite{malament1977}*{Lem.~3}. Instead of bijectivity, we assume the map in question to have dense image; moreover, we only hypothesize the distinction property for the target space.  For the reader's convenience, we include a proof, although it is almost the same as in \cite{malament1977}.

\begin{lemma}[Preservation of continuous timelike curves]\label{Le:PresTL} Let $(\mms,g)$ and $(\mms',g')$ be two   spacetimes with $(\mms',g')$ distinguishing. Suppose $\iota\colon\mms\to\mms'$ is a chronology-preserving map with dense image. Then $\iota$ preserves continuous timelike curves.
\end{lemma}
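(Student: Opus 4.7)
The plan is to verify two properties of $\iota\circ\gamma$ in turn: chronological monotonicity (i.e.\ the timelike character) and manifold continuity into $\mms'$. The first is immediate from chronology-preservation: for $s<t$ in the domain $I$ of $\gamma$, we have $\gamma(s)\ll\gamma(t)$ by hypothesis on $\gamma$, whence $\iota(\gamma(s))\ll'\iota(\gamma(t))$.

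For manifold continuity at an interior parameter $t_0\in I$, the idea is to sandwich $\iota(\gamma(t))$ for $t$ near $t_0$ inside an arbitrary prescribed neighborhood of $p':=\iota(\gamma(t_0))$. First I would fix an open neighborhood $V'\subset\mms'$ of $p'$ and shrink it to sit inside a convex normal neighborhood of $p'$ where the local chronology relation matches the Minkowskian model, selecting $a',b'\in V'$ with $a'\ll'p'\ll'b'$ and $I'^+(a')\cap I'^-(b')\subset V'$. Next, by density of $\iota(\mms)$ in $\mms'$, combined with openness and nonemptiness of $I'^+(a')\cap I'^-(p')$ and $I'^+(p')\cap I'^-(b')$, I would find $a,b\in\mms$ with $a'\ll'\iota(a)\ll'p'\ll'\iota(b)\ll'b'$. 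The push-up property of $\ll'$ then forces $I'^+(\iota(a))\cap I'^-(\iota(b))\subset I'^+(a')\cap I'^-(b')\subset V'$. Chronology-preservation yields $a\ll\gamma(t_0)\ll b$ in $\mms$, and continuity of $\gamma$ in the manifold topology of $\mms$ provides $\varepsilon>0$ such that $\gamma(t)\in I^+(a)\cap I^-(b)$ for $|t-t_0|<\varepsilon$, whence $\iota(\gamma(t))\in I'^+(\iota(a))\cap I'^-(\iota(b))\subset V'$. Endpoint parameters of $I$ are treated by one-sided versions of the same argument.

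The main technical hurdle I foresee lies in the containment $I'^+(a')\cap I'^-(b')\subset V'$. This is automatic in strongly causal spacetimes but delicate under mere distinction, since a timelike curve from $a'$ to $b'$ might a priori leave $V'$ and re-enter. The fix I anticipate is to work within the convex normal coordinates at $p'$ so that the local portion of the diamond is Minkowskian and stays inside $V'$, and to invoke density of $\iota(\mms)$ together with distinction of $\mms'$ to rule out any distant components of the diamond from interfering with the local analysis --- this is essentially the adaptation of Malament's original argument in \cite{malament1977} to our setting, in which bijectivity is replaced by density of the image.
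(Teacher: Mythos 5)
Your reduction of the statement to continuity of $\iota\circ\gamma$ plus chronological monotonicity is fine, and the monotonicity half is immediate as you say. The gap is in the continuity half, and it is exactly at the point you flag yourself: the containment $I'^+(a')\cap I'^-(b')\subset V'$ (equivalently $I'^+(\iota(a))\cap I'^-(\iota(b))\subset V'$) is a form of strong causality at $p'$, and it genuinely fails in spacetimes that are merely distinguishing. Distinction does not prevent a timelike curve from $a'$ to $b'$ from leaving every prescribed neighborhood of $p'$ and returning, so the chronological diamond of two points arbitrarily close to $p'$ can contain points far from $p'$. Your proposed repair --- ``work within the convex normal coordinates so the local portion of the diamond is Minkowskian'' and ``invoke density and distinction to rule out distant components of the diamond'' --- is not an argument: the whole difficulty is that $\iota(\gamma_t)$ might a priori land in such a distant component, and neither density of $\iota(\mms)$ nor distinction excludes the existence of those components. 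As stated, the sandwiching step does not go through.

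The actual mechanism (Malament's, reproduced in the paper) avoids diamonds entirely and uses a genuinely \emph{one-sided} consequence of distinction: future distinction at $p'$ provides, inside any convex neighborhood $O'$ of $p'$, an open $U'\ni p'$ such that every continuous timelike curve \emph{starting at $p'$} and \emph{ending in} $U'$ stays in $U'$ (Minguzzi, Thm.~4.44). Density of $\iota(\mms)$ then supplies $y$ with $\iota(y)\in I'^+(p')\cap U'$; for $t$ slightly larger than $t_0$ one has $\gamma_{t_0}\ll\gamma_t\ll y$, hence $p'\ll'\iota(\gamma_t)\ll'\iota(y)$, and the concatenated timelike curve from $p'$ through $\iota(\gamma_t)$ to $\iota(y)\in U'$ cannot leave $U'$, forcing $\iota(\gamma_t)\in U'\subset O'$. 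The symmetric argument with past distinction handles $t<t_0$. Note that this uses only curves anchored at the center point $p'$, which is why distinction suffices where your diamond containment would require strong causality. You should replace the sandwiching step by this one-sided neighborhood property.
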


\begin{proof}  To relax the notation, we only work with curves defined on $(0,1)$. Suppose now $\gamma\colon(0,1)\to\mms$ is a continuous timelike curve. Given any $t_0\in (0,1)$, let  $O'\subset\mms'$ form an open convex neighborhood of $\iota(\gamma_{t_0})$. It suffices to show the existence of an  open set $I\subset (0,1)$ containing $t_0$ with  $I\subset (\iota\circ\gamma)^{-1}(O')$, viz.~$\iota(\gamma_t)\in O'$ for every $t\in I$. 

As $(\mms',g')$ is future distinguishing, there exists an open set $U'\subset O'$ containing $\iota(\gamma_{t_0})$ such that every continuous timelike curve starting from $\iota(\gamma_{t_0})$ and ending in $U'$ is entirely contained in $U'$, cf.~Minguzzi \cite{minguzzi2019-causality}*{Thm.~4.44}. Since $\iota(\mms)$ is dense in $\mms'$, there exists a point $y\in\mms$ with $\iota(y) \in I'^+(\iota(\gamma_{t_0}))\cap U'$. Chronology-preservation of $\iota$ implies $\gamma_{t_0}\ll y$. This implies existence of an open convex neighborhood $\smash{O\subset I^-(y)}$ of $\gamma_{t_0}$. Continuity of $\gamma$ implies there is an open interval $I_1 \subset (0,1)$ around $t_0$ such that for every $t\in I_1 \cap [t_0,1)$, we have $\gamma_t\in O$. We now claim $\iota(\gamma_t)\in O'$ for every $t\in I_1 \cap [t_0,1)$. Indeed, we have $\gamma_{t_0} \ll \gamma_t \ll y$ by construction, hence $\smash{\iota(\gamma_{t_0}) \ll' \iota(\gamma_t) \ll'\iota(y)}$ by chronology-preservation of $\iota$.  Therefore, there exists a continuous timelike curve  starting in $\iota(\gamma_{t_0})$ through $\iota(\gamma_t)$ which ends in $\iota(y)$. Notably, since its endpoint lies in $U'$,  this curve does not leave $U'$. In particular, the inclusion $U'\subset O'$ entails $\iota(\gamma_t) \in O'$.

An analogous argument based on past  distinction of $\smash{(\mms',g')}$ yields the existence of an open interval $I_2\subset (0,1)$ around $t_0$ such that $\iota(\gamma_t)\in O'$ for every $t\in I_2 \cap (0,t_0]$.

Setting $I := I_1\cup I_2$ yields the desired interval above.
\end{proof}

The following  theorem is similar in spirit to a recent result of  Braun--Sämann \cite{braun-samann+}*{Lem. 2.13}. They give  such a result for  distance-preserving maps between Minguzzi--Suhr's bounded Lorentzian metric spaces \cite{minguzzi-suhr2022}, while ours is about order-preservation. 

\begin{theorem}[Existence of extension]\label{Th:Ext} Let $(\mms,g)$ and $(\mms',g')$ be two causally continuous  spacetimes, where $(\mms',g')$ is also future chronocomplete. Given any  dense set $D\subset\mms$,  also suppose $\iota\colon D\to \mms'$ is chronology-preserving with dense image. Then there exists a   chronology-preserving map that extends $\iota$ to all of $\mms$. 

If $(\mms,g)$ is also future  chronocomplete, this  extension is a chronological isomorphism.
\end{theorem}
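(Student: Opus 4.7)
The plan is to define $\tilde{\iota}\colon \mms \to \mms'$ pointwise through the limit formula of \cref{Pr:Uniq}, then to verify well-posedness, the extension property $\tilde{\iota}|_D=\iota$, chronology-preservation, and finally bijectivity under the additional hypothesis. For each $x\in\mms$ I would use a past-directed timelike curve terminating at $x$ together with density of $D$ in nested open chronological diamonds to build a chronologically increasing sequence $(x_i)_{i\in\N}\subset D$ with $x_i\ll x$ and $x_i\to x$; density of $D$ in the nonempty open set $I^+(x)$ also yields $y\in D$ with $x\ll y$. Chronology-preservation renders $(\iota(x_i))$ chronologically increasing with $\iota(x_i)\ll'\iota(y)$, and future chronocompleteness of $(\mms',g')$ produces a limit, which I set as $\tilde{\iota}(x)$. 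For well-posedness, given another such sequence $(\bar x_i)$ with limit $\bar y'$, openness of $I^+(x_i)\ni x$ together with $\bar x_j\to x$ force $x_i\ll\bar x_j$ for large $j$, whence $\iota(x_i)\ll'\iota(\bar x_j)$; letting $j\to\infty$ yields $\iota(x_i)\leq_\rmp'\bar y'$, and then $i\to\infty$ gives $\tilde{\iota}(x)\leq_\rmp'\bar y'$. Symmetry and antisymmetry of $\leq_\rmp'$ (\cref{Th:Antisy}, via distinction of $\mms'$) close this step.

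\textbf{Extension property and chronology-preservation.} For $x\in D$, $\tilde{\iota}(x)\leq_\rmp'\iota(x)$ follows from $\iota(x_i)\ll'\iota(x)$ and closure. For the reverse, given $u\ll'\iota(x)$, density of $\iota(D)$ produces $z\in D$ with $u\ll'\iota(z)\ll'\iota(x)$; chronology-preservation gives $z\ll x$, then openness of $I^+(z)\ni x$ with $x_i\to x$ yields $z\ll x_i$ for large $i$, whence $\iota(z)\leq_\rmp'\tilde{\iota}(x)$ in the limit. Applying \cref{Le:Push-up} to $u\ll'\iota(z)\leq_\rmp'\tilde{\iota}(x)$ delivers $u\ll'\tilde{\iota}(x)$, and antisymmetry forces $\tilde{\iota}(x)=\iota(x)$. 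For global chronology-preservation: if $x\ll y$, density produces $z,z'\in D$ with $x\ll z\ll z'\ll y$; the defining sequences for $x$ and $y$ yield $\tilde{\iota}(x)\leq_\rmp'\iota(z)$ and $\iota(z')\leq_\rmp'\tilde{\iota}(y)$, with $\iota(z)\ll'\iota(z')$ from $z\ll z'$; reflectivity of $\mms'$ identifies $\leq_\rmp'$ with $\leq_\rmf'$, so two applications of \cref{Le:Push-up} produce $\tilde{\iota}(x)\ll'\tilde{\iota}(y)$. Conversely, starting from $\tilde{\iota}(x)\ll'\tilde{\iota}(y)$, density of $\iota(D)$ inserts $\iota(z)\ll'\iota(z')$ between, openness and convergence yield $x_i\ll z$ and $z'\ll y_j$ for large $i,j$; taking closure limits produces $x\leq_\rmp z\ll z'\leq_\rmp y$, and reflectivity of $\mms$ plus two push-ups returns $x\ll y$.

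\textbf{Bijectivity.} With $\tilde{\iota}$ chronology-preserving between distinguishing spacetimes, \cref{Le:Inj} renders it injective and $\iota^{-1}\colon\iota(D)\to D$ well-defined. Under the additional future chronocompleteness of $\mms$, applying the first half of the theorem with $(\mms,g)$ and $(\mms',g')$ swapped and $D$ replaced by $\iota(D)$ (dense in $\mms'$) produces a chronology-preserving $\tilde\sigma\colon\mms'\to\mms$ extending $\iota^{-1}$. Both $\tilde\sigma\circ\tilde{\iota}$ and $\Id_\mms$ extend $\Id_D$, so \cref{Pr:Uniq} applied to $\mms$ (distinguishing, past reflecting, future chronocomplete) forces $\tilde\sigma\circ\tilde{\iota}=\Id_\mms$, and symmetrically $\tilde{\iota}\circ\tilde\sigma=\Id_{\mms'}$, making $\tilde{\iota}$ a chronological isomorphism.

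The principal obstacle throughout will be chronology-preservation: every limit procedure weakens $\ll$ to $\leq_\rmp$, and every recovery of $\ll$ must combine density of $D$ and $\iota(D)$, openness of chronological sets, reflectivity (to bridge $\leq_\rmp$ and $\leq_\rmf$), and the push-up \cref{Le:Push-up}; antisymmetry of $\leq_\rmp$ from \cref{Th:Antisy} is ultimately what converts two-sided $D$-inequalities into equalities.
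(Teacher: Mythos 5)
Your proposal is correct and follows essentially the same route as the paper's proof: approximate each point by a chronologically increasing sequence in $D$, invoke future chronocompleteness of $(\mms',g')$ to define the limit, and recover chronology-preservation of the extension by combining density, the relations $\leq_\rmf'$ and $\leq_\rmp'$, reflectivity, and \cref{Le:Push-up}, exactly as the paper does. The only organizational difference is in the surjectivity step, where you symmetrically extend $\iota^{-1}$ to a map $\tilde\sigma$ and use \cref{Pr:Uniq} to identify $\tilde\sigma\circ\tilde{\iota}$ with $\Id_\mms$, whereas the paper constructs preimages directly from future chronocompleteness of $(\mms,g)$ and \eqref{Eq:Equ!} --- an equivalent repackaging of the same idea.
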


Recall causal continuity means distinction plus reflectivity, cf.~\cref{Def:Causalcontinuity}.

\begin{proof}[Proof of \cref{Th:Ext}]  Let $x\in \mms\setminus D$. As $D$ is dense in $\mms$, we find $\smash{x^-,x^+ \in D}$ such that $\smash{x^- \ll x \ll x^+}$.  Owing to first countability and Hausdorffness of $\mms$, let $(N_i)_{i\in\N}$ constitute a nonincreasing  sequence of open neighborhoods of $x$ with  $\smash{\bigcap_{i\in\N}N_i = \{x\}}$. Define $\smash{x_0 := x^-}$. Clearly, the open set $I(x_0,x) \cap N_1$ is nonempty.  By density of $D$ in $\mms$, it contains a point $x_1\in D$ we fix. Inductively, we assume $x_i \in I(x_{i-1},x) \cap N_i \cap D$ is given for $i\in\N$. We analogously obtain the existence of a point $x_{i+1} \in  I(x_i,x) \cap N_{i+1} \cap D$ we fix. This way, we construct a chronologically increasing  sequence $(x_i)_{i\in\N}$ in $\mms$ converging to $x$.

By chronology-preservation of $\iota$, the sequence $(\iota(x_i))_{i\in\N}$ is  chronologically increasing  in $\mms'$. We also have $\smash{\iota(x_i) \ll'  \iota(x^+)}$ for every $i\in\N$. By future chronocompleteness, the sequence in question thus converges to a point $\smash{x' \in \mms'}$. We define $\smash{\iota(x) := x'}$. Clearly, this defines a (nonrelabeled) map $\iota\colon\mms\to \mms'$ which extends $\iota$ beyond $D$.

We claim $\iota$ is chronology-preserving. Let $x,y\in \mms$. If $x,y\in D$, we are done. Now suppose $x,y\in\mms\setminus D$. Assume $x\ll y$. Let $(x_i)_{i\in\N}$ and $(y_i)_{i\in\N}$ be sequences associated to $x$ and $y$ as  above, respectively. By density of $D$ in $\mms$, there exist $\smash{x^+,y^-\in D}$ such that $\smash{x \ll x^+ \ll y^-\ll y}$. Then for every sufficiently large $i\in\N$, we have $\smash{x_i\ll x^+ \ll y^- \ll y_i}$. By chronology-preservation, we have $\smash{\iota(x_i)\ll' \iota(x^+) \ll' \iota(y^-) \ll' \iota(y_i)}$. This shows that $\smash{\iota(x) \in \cl\,J'^-(\iota(x^+))}$ and thus $\smash{\iota(x^+)\in \cl \,J'^+(\iota(x))}$ since $(\mms,g)$ is future reflecting. This results in $\smash{\iota(x)\leq_\rmf' \iota(x^+)}$. Analogously, we prove $\smash{\iota(y^-) \leq_\rmp' \iota(y)}$ based on  past reflectivity of $(\mms',g')$. This yields the chain $\smash{\iota(x) \leq_\rmf'\iota(x^+) \ll \iota(y^-)\leq_\rmp'\iota(y)}$. Applying \cref{Le:Push-up} twice then entails $\iota(x)\ll\iota(y)$, as desired. Conversely, if $\iota(x) \ll'\iota(y)$, repeating this argument ---  using  density of $\iota(D)$ in $\mms'$ and the reflectivity  of $(\mms,g)$ --- establishes $x\ll y$. The remaining cases $x\in \mms\setminus D$ and $y\in D$ or $x\in D$ and $y\in\mms\setminus D$ are  shown  analogously; here, the argument is  simpler by taking the relevant sequence to be constant. 

We turn to the last statement. As $(\mms,g)$ is distinguishing as well, injectivity of the extension $\iota\colon\mms\to\mms'$ follows from the third paragraph and \cref{Le:Inj}. Surjectivity is argued as follows. The hypotheses and \cref{Pr:Uniq} imply $\iota$ necessarily  satisfies \eqref{Eq:Equ!}. Let $x'\in\mms'$; it suffices to address the case  $x'\in\mms'\setminus D$. By density of $\iota(D)$ in $\mms'$, we fix $\smash{x'^-,x'^+\in \iota(D)}$ with $\smash{x'^-\ll'  x'\ll' x'^+}$. The construction from the first paragraph yields a chronologically increasing sequence $\smash{(x'_i)_{i\in\N}}$ in $\iota(D)$ converging to $x'$. We shall write $\smash{\iota(x^+) := x'^+}$ and $\smash{\iota(x_i) := x_i'}$ for certain $\smash{x^+,x_i\in D}$, where $i\in\N$. Thanks to chronology-preservation, $(x_i)_{i\in\N}$ is a chronologically increasing sequence in $D$ with $\smash{x_i \ll x^+}$ for every $i\in\N$. The hypothesized future chronocompleteness of $(\mms,g)$ implies the existence of a limit $x\in\mms$ of $\smash{(x_i)_{i\in\N}}$.  Then $\iota(x) = x'$ by  \eqref{Eq:Equ!}, as desired.
\end{proof}

\begin{corollary}[Existence and uniqueness of extension]\label{Cor:ExUn} We suppose $(\mms,g)$ and $(\mms',g')$ are causally continuous and future chronocomplete spacetimes. Let $\iota\colon D\to\mms'$ form a chronology-preserving map with dense image, where $D\subset\mms$ is dense. Then $\iota$ constitutes the restriction of a unique chronological isomorphism; the latter satisfies \eqref{Eq:Equ!}.
\end{corollary}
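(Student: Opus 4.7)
The plan is to combine the two main results of the subsection, as the corollary is essentially the conjunction of \cref{Th:Ext} (for existence) and \cref{Pr:Uniq} (for uniqueness and the limit formula).

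First I would verify that the hypotheses of \cref{Th:Ext} are met: causal continuity of both spacetimes is assumed, and future chronocompleteness of $(\mms',g')$ is part of the assumptions. Thus \cref{Th:Ext} produces a chronology-preserving extension $\tilde{\iota}\colon\mms\to\mms'$ of $\iota$. Moreover, since $(\mms,g)$ is also future chronocomplete, the ``in particular'' conclusion of \cref{Th:Ext} upgrades $\tilde{\iota}$ to a chronological isomorphism, giving existence.

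Next I would check that the hypotheses of \cref{Pr:Uniq} are satisfied: causal continuity of $(\mms',g')$ encompasses distinction and (past) reflectivity via \cref{Def:Causalcontinuity,Def:Reflectivity}, and future chronocompleteness of $(\mms',g')$ is given. Hence \cref{Pr:Uniq} applies and any extension of $\iota$ to all of $\mms$ must satisfy the limit formula \eqref{Eq:Equ!}; in particular, the chronological isomorphism produced above is the only extension, and it obeys \eqref{Eq:Equ!}.

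There is no real obstacle here, since both input results were already established; the only thing to be a bit careful about is confirming that causal continuity supplies both the distinction and the reflectivity needed by \cref{Pr:Uniq} and by the proof of \cref{Th:Ext}, and that future chronocompleteness of \emph{both} spacetimes is what delivers the isomorphism property (surjectivity) as opposed to a mere chronology-preserving extension.
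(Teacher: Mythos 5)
Your proposal is correct and matches the paper's (implicit) argument exactly: the corollary is obtained by invoking \cref{Th:Ext} for existence of the chronological isomorphism (using future chronocompleteness of both spacetimes) and \cref{Pr:Uniq} for uniqueness and the validity of \eqref{Eq:Equ!}, with causal continuity supplying the distinction and reflectivity hypotheses of both results.
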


Recall every globally hyperbolic spacetime is causally continuous and chronocomplete by \cref{Le:GHtoFC}. Thus,  \cref{Cor:ExUn} implies the following.

\begin{corollary}[Extension under global hyperbolicity] Let  $(\mms,g)$ and $(\mms',g')$ constitute two globally hyperbolic  spacetimes. Suppose moreover that $\iota\colon D\to \mms'$ is chronology-preserving with dense image, where  $D\subset\mms$ is dense. Then there is a unique chronological isomorphism extending $\iota$ beyond $D$; this extension obeys \eqref{Eq:Equ!}.
\end{corollary}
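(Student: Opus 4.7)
The plan is essentially a packaging argument: this corollary should follow immediately by chaining \cref{Le:GHtoFC} with \cref{Cor:ExUn}. First I would invoke \cref{Le:GHtoFC} to observe that both globally hyperbolic spacetimes $(\mms,g)$ and $(\mms',g')$ are automatically causally continuous and chronocomplete; in particular, each of them is future chronocomplete. Thus the pair $(\mms,g),(\mms',g')$ together with the given map $\iota\colon D\to\mms'$ verbatim satisfies every hypothesis of \cref{Cor:ExUn}.

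Second, I would simply apply \cref{Cor:ExUn}, which yields at once the existence and uniqueness of a chronological isomorphism $\tilde\iota\colon\mms\to\mms'$ restricting to $\iota$ on $D$, as well as the limit identity \eqref{Eq:Equ!} characterizing $\tilde\iota$ along chronologically increasing sequences in $D$ converging to the argument. No further construction is needed.

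I do not expect any real obstacle, since all genuine content has been absorbed into the earlier results: existence was produced in \cref{Th:Ext} by the construction of a chronologically increasing approximating sequence together with future chronocompleteness and reflectivity to promote order-preservation across the limit, while uniqueness and the limit formula were settled in \cref{Pr:Uniq} using past reflectivity and antisymmetry of the $D$-relation $\smash{\leq_\rmp'}$ on distinguishing spacetimes. The only thing to double-check is that ``globally hyperbolic'' in the sense of \cref{Sub:Causthry} supplies both the causal continuity and the two-sided chronocompleteness demanded by \cref{Cor:ExUn}; this is precisely what \cref{Le:GHtoFC} records, so the corollary follows by a one-line citation.
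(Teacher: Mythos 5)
Your proposal is correct and matches the paper's own argument exactly: the paper likewise derives this corollary by citing \cref{Le:GHtoFC} to obtain causal continuity and chronocompleteness of both globally hyperbolic spacetimes, and then applies \cref{Cor:ExUn}. Nothing further is needed.
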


\subsection{Preparation for ``number''. Generic sequences} Recall the modeling from \cref{Sub:Contri}. In particular, recall $\meas$ denotes the normalized volume measure \eqref{Eq:NorVol} induced by $(\mms,g)$.

The material we collect now will  ensure later that the maps we  construct are in fact volume-preserving. The   considerations to follow apply to every separable metric space and do not involve spacetime geometry at all. 

Following Gromov \cite{gromov1981}*{§3$\smash{\frac{1}{2}}$.22}, we define the set 
\begin{align*}
G &:= \Big\{(x_i)_{i\in\N} \in \mms^\infty : \lim_{n\to\infty} \frac1n\big[\varphi(x_1) + \dots + \varphi(x_n)\big] = \int_\mms\varphi\d\meas\\
&\qquad\qquad \textnormal{for every bounded and continuous  function }\varphi\colon\mms\to\R\Big\}.
\end{align*}
The elements of $G$ is called \emph{generic sequences}.  Clearly, $(x_i)_{i\in\N}$ is a generic sequence in $\mms$ if and only if the sequence $(\meas_n)_{n\in\N}$ converges narrowly to $\meas$ in $\Prob(\mms)$, where 
\begin{align}\label{Eq:Narrowly}
\meas_n := \frac1n\big[\delta_{x_1} + \dots + \delta_{x_n}\big].
\end{align}

\begin{remark}[Modification of generic sequences] Genericity of a sequence is not altered by finite permutations of its elements or removal of finitely many points.
\end{remark}

\begin{lemma}[Density]\label{Le:Dens} Every generic  sequence $(x_i)_{i\in\N}$ in $\mms$ is dense in $\mms$. 
\end{lemma}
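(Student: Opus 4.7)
The plan is to argue by contradiction: assume some generic sequence $(x_i)_{i\in\N}$ fails to be dense in $\mms$, and construct a continuous test function that detects this failure while witnessing a positive integral against $\meas$.

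First, I would observe that $\meas$ has full support. Indeed, $\meas$ is the normalization \eqref{Eq:NorVol} of the Lorentzian volume measure $\vol_g$ on a smooth connected manifold, and $\vol_g$ is a positive measure which is strictly positive on every nonempty open set (this has already been used implicitly in the proof of \cref{Th:CI}). Hence $\supp\meas=\mms$.

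Next, suppose for contradiction that some generic sequence $(x_i)_{i\in\N}$ is not dense. Then there exists $x\in\mms\setminus\cl\{x_i:i\in\N\}$, and by choosing a sufficiently small open metric ball $B$ around $x$ contained in this open complement, the set $B$ intersects no element of the sequence. Using Urysohn's lemma on the metrizable space $\mms$ (or a simple direct construction via the distance function to $\mms\setminus B$), I would pick a continuous function $\varphi\colon\mms\to[0,1]$ with $\varphi(x)=1$ and $\supp\varphi\subset B$. This function is bounded and continuous, and because $x\in\supp\meas$ with $\varphi$ continuous and strictly positive near $x$, we have $\int_\mms\varphi\d\meas>0$.

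Finally, since $x_i\notin B\supset\supp\varphi$ for every $i\in\N$, each $\varphi(x_i)=0$, so the Cesàro averages $\tfrac1n[\varphi(x_1)+\dots+\varphi(x_n)]$ are identically zero and cannot converge to the strictly positive number $\int_\mms\varphi\d\meas$. This contradicts the genericity of $(x_i)_{i\in\N}$. No step is really an obstacle here; the only subtlety is to justify the full support of $\meas$, which is routine for volume measures of smooth manifolds.
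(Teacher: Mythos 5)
Your argument is correct and essentially identical to the paper's: both construct a continuous bump function supported in a small ball, use the full support of $\vol_g$ to get a strictly positive integral, and invoke the defining Ces\`aro property of generic sequences to conclude. The only cosmetic difference is that you phrase it as a contradiction, whereas the paper argues directly that infinitely many $x_i$ must enter any given ball.
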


\begin{proof} We abbreviate $D := \{x_i : i\in\N\}$. Given any $x\in\mms$ and any $r>0$, it suffices to show $D\cap B_r(x) \neq\emptyset$, where $B_r(x)$ denotes the open Riemannian ball of radius $r$ centered at $x$. Consider the bounded and continuous function $\varphi\colon\mms\to \R$ with $\smash{\varphi:= (r-\met(x,\cdot))^+}$, where $\met$ is  the Riemannian distance induced by the tacit  Riemannian metric on $\mms$. As $\varphi$ is non\-negative on $\mms$ and bounded from below by $r/2$  on $\smash{B_{r/2}(x)}$  and  as $\smash{\vol_g[B_{r/2}(x)] >0}$ by the full support of $\smash{\vol_g}$, we easily obtain $\smash{\int_\mms \varphi\d\meas > 0}$. Using $(x_i)_{i\in\N}$ is generic, this implies infinitely many $i\in\N$ satisfy $\varphi(x_i) > 0$ and thus $x_i\in B_r(x)$.
\end{proof}

The next step consists in showing the existence of a generic sequence in $\mms$. For this, we need the following lemma, which is no more than general probability theory.

\begin{lemma}[Countable intersection]\label{Le:Count!} There exists a countable class $\scrC$ of bounded and continuous functions on $\mms$ such that $\smash{G = \bigcap_{\varphi\in\scrC} G_\varphi}$, where
\begin{align*}
G_\varphi := \Big\{(x_i)_{i\in\N}\in\mms^\infty : \lim_{n\to\infty} \frac1n \big[\varphi(x_1) + \dots + \varphi(x_n)\big] = \int_\mms\varphi\d\meas\Big\}.
\end{align*} 

In particular, the totality of generic sequences in $\mms$ is $\meas^{\otimes\infty}$-measurable.
\end{lemma}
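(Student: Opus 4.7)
The plan is to exploit that $\mms$, being a smooth manifold, is separable and metrizable, hence Polish, so that narrow convergence in $\Prob(\mms)$ admits a countable convergence-determining class of bounded continuous test functions. Once such a class is identified, both the desired identity $G = \bigcap_{\varphi\in\scrC} G_\varphi$ and the measurability of $G$ will follow readily.

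First, I would equip $\mms$ with the distance $\met$ of some auxiliary Riemannian metric, making it a separable complete metric space. Next, I would fix a countable dense set $\smash{\{y_j\}_{j\in\N}\subset\mms}$ and, for $j,m,k\in\N$, introduce the bounded Lipschitz ``bump'' functions
\begin{align*}
\varphi_{j,m,k}(x) := \min\bigl\{1, k\cdot(1/m - \met(x,y_j))^+\bigr\}, \qquad x\in\mms.
\end{align*}
Taking also the constant function $1$, I would let $\scrC$ be the countable $\mathbb Q$-vector space they generate, restricted to a countable subset, and verify that $\scrC$ is convergence-determining for $\Prob(\mms)$, i.e.~$\meas_n\to\meas$ narrowly as $n\to\infty$ if and only if $\int_\mms\varphi\d\meas_n \to\int_\mms\varphi\d\meas$ for every $\varphi\in\scrC$. (Alternatively, one may simply invoke that for any separable metric space $\Prob(\mms)$ endowed with the narrow topology is itself separable and metrizable, so a countable convergence-determining class exists; see e.g.~Ambrosio--Gigli--Savaré.)

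Given such $\scrC$, the inclusion $G\subset\bigcap_{\varphi\in\scrC} G_\varphi$ is immediate from the definition. For the reverse inclusion, fix $(x_i)_{i\in\N}$ in the right-hand intersection and consider the empirical measures $\smash{\meas_n}$ from \eqref{Eq:Narrowly}. The assumption gives $\int_\mms\varphi\d\meas_n\to\int_\mms\varphi\d\meas$ for every $\varphi\in\scrC$; since $\scrC$ is convergence-determining, $\meas_n\to\meas$ narrowly, which is precisely the statement that $(x_i)_{i\in\N}\in G$.

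Finally, for measurability: for every fixed bounded continuous $\varphi\colon\mms\to\R$, the map $\smash{(x_i)_{i\in\N}\mapsto \frac1n[\varphi(x_1)+\dots+\varphi(x_n)]}$ is Borel measurable on $\mms^\infty$, and limsup and liminf of sequences of Borel real-valued functions are Borel, so $G_\varphi$ (the set where the limit exists and equals $\int_\mms\varphi\d\meas$) is a Borel subset of $\mms^\infty$. Hence the countable intersection $G = \bigcap_{\varphi\in\scrC}G_\varphi$ is Borel, and a fortiori $\meas^{\otimes\infty}$-measurable. The only genuine input is the existence of a countable convergence-determining family, which is where I would focus the write-up; everything else is formal.
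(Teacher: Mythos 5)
Your proposal is correct and follows essentially the same route as the paper: both arguments reduce the claim to the existence of a countable convergence-determining family of bounded continuous functions on the separable space $\mms$ (the paper simply cites Ambrosio--Gigli--Savar\'e, p.~107, for this, which is the alternative you mention), after which the identity $G=\bigcap_{\varphi\in\scrC}G_\varphi$ and the Borel measurability of each $G_\varphi$ are immediate. Your explicit construction of $\scrC$ via Lipschitz bump functions is a fine (if slightly more laborious) substitute for that citation.
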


\begin{proof} The last statement is a trivial consequence of the first claim.

To show the first claim, recall $G$ equals the set of all sequences $(x_i)_{i\in\N}$ in $\mms$ such that $(\meas_n)_{n\in\N}$ converges narrowly to $\meas$ in the notation of \eqref{Eq:Narrowly}. Since $\mms$ is separable, a general fact, cf.~e.g.~Ambrosio--Gigli--Savaré \cite{ambrosio-gigli-savare2008}*{p.~107}, asserts the narrow topology of $\Prob(\mms)$ is already determined by a countable family $\scrC$ of bounded and continuous functions. In particular, this implies $\smash{G= \bigcap_{\varphi\in\scrC} G_\varphi}$, which  is the desired identity.
\end{proof}

\begin{corollary}[Conegligibility]\label{Le:Coneg} We have   $\smash{\meas^{\otimes \infty}[G]=1}$.
\end{corollary}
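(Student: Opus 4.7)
The statement is a direct consequence of the strong law of large numbers (SLLN) combined with the countable reduction established in \cref{Le:Count!}. My plan is to fix the countable family $\scrC$ of bounded and continuous test functions afforded by \cref{Le:Count!}, verify $\meas^{\otimes\infty}[G_\varphi] = 1$ for each $\varphi \in \scrC$, and then take a countable intersection.

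\textbf{Key step.} For an arbitrary bounded and continuous function $\varphi \colon \mms \to \R$, consider on the probability space $(\mms^\infty,\meas^{\otimes\infty})$ the coordinate projections $X_i\colon \mms^\infty \to \mms$, which by construction are i.i.d.~with common law $\meas$. Since $\varphi$ is bounded, the real-valued random variables $\varphi(X_i)$ are i.i.d.~and integrable with common mean $\int_\mms \varphi \d\meas$. Kolmogorov's SLLN  then gives
\begin{align*}
\lim_{n\to\infty} \frac{1}{n}\big[\varphi(X_1)+\dots+\varphi(X_n)\big] = \int_\mms \varphi \d\meas \qquad \meas^{\otimes\infty}\textnormal{-a.s.},
\end{align*}
which is exactly the statement $\meas^{\otimes\infty}[G_\varphi] = 1$.

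\textbf{Conclusion.} Applying this to each $\varphi$ in the countable family $\scrC$ from \cref{Le:Count!} and taking the countable intersection of the resulting $\meas^{\otimes\infty}$-conull sets yields
\begin{align*}
\meas^{\otimes\infty}[G] = \meas^{\otimes\infty}\Big[\bigcap_{\varphi \in \scrC} G_\varphi\Big] = 1,
\end{align*}
as desired. There is no real obstacle here; the only subtlety worth noting is that the SLLN must be applied separately to each $\varphi$ on a $\varphi$-dependent conull set, which is why the countability of $\scrC$ furnished by \cref{Le:Count!} is essential --- an uncountable intersection could a priori be non-measurable or null.
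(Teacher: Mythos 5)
Your proof is correct and follows exactly the paper's own argument: apply the strong law of large numbers to the i.i.d.~integrable variables $\varphi\circ\pr_i$ to get $\meas^{\otimes\infty}[G_\varphi]=1$ for each $\varphi$, then use the countable intersection from \cref{Le:Count!}. Your closing remark on why countability of $\scrC$ is essential is a nice touch, but there is nothing to add or correct.
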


\begin{proof} We are exceptionally deviating a little from our notation and define a probability space $(\Omega,\scrF,\PPP)$ as follows. Let $\Omega := \mms^\infty$, $\scrF$ be the completion of the $\sigma$-algebra on $\mms^\infty$ induced by  the Borel $\sigma$-algebra on $\mms$, and $\PPP := \meas^{\otimes\infty}$. Let $\varphi\colon \mms \to \R$ be a bounded and continuous function. Define a sequence  $\smash{(X_i^\varphi)_{i\in\N}}$ of random variables $X_i\colon\Omega\to\R$ by $\smash{X_i^\varphi := \varphi\circ\pr_i}$, where $\pr_i \colon\mms^\infty \to\mms$ denotes the projection onto the $i$-th coordinate of its argument. Then the random variables $\smash{(X_i^\varphi)_{i\in\N}}$ are clearly i.i.d.~and $\PPP$-integrable (as $\varphi$ is bounded). The strong law of large numbers implies $\smash{\PPP[G_\varphi]=1}$ with the $\smash{\scrB^{\otimes\infty}}$-measurable set $G_\varphi$ from \cref{Le:Count!}. The same lemma states $G$ is the countable intersection of a family of such  sets. This implies $\PPP[G]=1$, as desired.
\end{proof}

\subsection{``Order''. Conformal isometry by chronological isomorphy}\label{Sub:RDR} Now we turn to the ``order'' component of the proof of \cref{Th:Main}, ``if'' implication. It will imply  the two spacetimes in question are smoothly conformally isometric. 

We consider the projective family $\smash{(\nu^k)_{k\in\N}}$  from \eqref{Eq:nun}  with respect to $(f_k)_{k\in\N}$, where the map $\smash{f_{k+1}\colon \{0,1\}^{(k+1)\times (k+1)} \to \{0,1\}^{k\times k}}$ removes the last column and row of its input. Let $\smash{\{0,1\}^{\infty\times \infty} := \lim_{\leftarrow}\, \{0,1\}^{\bullet \times \bullet}}$ 
 be the standard Borel space of infinite $0$-$1$-matrices with the induced projection maps $\smash{\pr_k\colon \{0,1\}^{\infty\times\infty} \to \{0,1\}^{k\times k}}$, where $k\in\N$, and the induced projective $\sigma$-algebra, cf.~\cref{Sub:Kolmo}. By Kolmogorov's extension \cref{Th:Kolmogorov}, there is a unique Borel probability measure $\nu^\infty$ on $\smash{\{0,1\}^{\infty\times\infty}}$ such that $\smash{(\pr_k)_\push\nu^\infty = \nu^k}$ for every $k\in\N$. 

Uniqueness and the hypothesis $\smash{\nu^k= \nu'^k}$ for every $k\in\N$ from \cref{Th:Main} yield
\begin{align}\label{Eq:nununu}
\nu^\infty = \nu'^\infty.
\end{align}
On the other hand, define the $\smash{\meas^{\otimes \infty}}$-measurable map $\smash{\sfC^\infty\colon \mms^\infty \to \{0,1\}^{\infty\times\infty}}$ by
\begin{align*}
\sfC^\infty((x_i)_{i\in\N})_{kl} := \begin{cases} 1 & \textnormal{if }x_k \ll x_l,\\
0 & \textnormal{otherwise}.
\end{cases}
\end{align*}
Trivially, we then have $\smash{(\pr_k)_\push \sfC^\infty_\push\meas^{\otimes \infty} = \nu^k}$ for every $k\in\N$. By the uniqueness part of Kolmogorov's extension \cref{Th:Kolmogorov}, this implies $\smash{\nu^\infty = \sfC^\infty_\push\meas^{\otimes \infty}}$. Hence, by since the the sets $G$ and $G'$ of generic sequences have full measure with respect to $\smash{\meas^{\otimes\infty}}$ and $\smash{\meas'^{\otimes\infty}}$, respectively, by \cref{Le:Coneg}, from \eqref{Eq:nununu} we obtain
\begin{align}\label{Eq:Full}
 \sfC^\infty_\push\meas^{\otimes\infty} = \sfC'^{\otimes\infty}_\push\meas'^{\otimes\infty}.
\end{align}

The sets $\smash{\sfC^\infty(G)}$ and $\smash{\sfC'^\infty(G')}$ are measurable with respect to the measure \eqref{Eq:Full}, cf.~e.g. Srivastava \cite{srivastava1998}*{Thm.~4.3.1}. By \cref{Le:Coneg}, both  have full measure with respect to the measure \eqref{Eq:Full}; in particular, their intersection  is nonempty. Hence, there exist generic sequences $\smash{(x_i)_{i\in\N}}$ and $\smash{(x_i')_{i\in\N}}$ in $\mms$ and $\mms'$, respectively, with the property
\begin{align*}
\sfC^\infty((x_i)_{i\in\N}) = \sfC'^\infty((x_i')_{i\in\N}).
\end{align*}
In other words, every $\smash{i,j\in\N}$, we have $\smash{x_i\ll x_j}$ if and only if $\smash{x_i' \ll x_j'}$.

Set $\smash{D:= \{x_i :i\in\N\}}$ and $\smash{D' :=\{x_i' : i\in\N\}}$. We define chronology-preserving maps $\iota\colon D\to D'$ and $\kappa\colon D'\to D$ in the evident way by $\smash{\iota(x_i) := x_i'}$ and $\smash{\kappa(x_i') := x_i}$. We first claim these maps are well-defined. Indeed, suppose for some distinct $i,j\in\N$ we would  have $\smash{x_i = x_j}$ yet $\smash{x_i' \neq x_j'}$. As $(\mms',g')$ is distinguishing and $D'\setminus\{x_i', x_j'\}$ is dense in $\mms'$ by \cref{Le:Dens}, there exists $\smash{k\in\N}$ such that  $\smash{x_k'\in D'\setminus\{x_i', x_j'\}}$ is contained in one of the following sets: $\smash{I'^+(x_i')\setminus I'^+(x_j')}$, $\smash{I'^+(x_j')\setminus I'^+(x_i')}$, $\smash{I'^-(x_i')\setminus I'^-(x_j')}$, or $\smash{I'^-(x_j')\setminus I'^-(x_i')}$. We will assume the first inclusion; the proof is analogous in the other cases. This yields $\smash{x_i'\ll' x_k'}$ yet $\smash{x_j'\not\ll' x_k'}$ and therefore $\smash{x_i\ll x_k}$ yet $\smash{x_j\not\ll x_k}$ by the construction of the two generic sequences, which clearly contradicts the assumption $\smash{x_i=x_j}$. Analogously, we use  the distinction property of $(\mms,g)$ yields well-definedness of $\iota$.

With our hypotheses on $(\mms,g)$ and $(\mms',g')$,  \cref{Le:Dens}, and \cref{Cor:ExUn}, the above discussion yields $\iota$ and $\kappa$ uniquely extend to (nonrelabeled) chronological isomorphisms $\iota\colon \mms\to\mms'$ and $\kappa'\colon\mms'\to\mms$. In addition, the  identities $\smash{\iota\circ\kappa(x_i) = x_i}$ and $\smash{\kappa\circ\iota(x_i') = x_i'}$ for every $i\in\N$ combine with \eqref{Eq:Equ!} to yield $\smash{\kappa = \iota^{-1}}$.

By \cref{Th:Malament}, this implies $\iota$ (hence $\kappa$)  is a homeomorphism and a smooth conformal isometry. Let $\Sigma\colon\mms\to\R$ be a smooth nowhere vanishing function with $\smash{\iota^*g' = \Sigma^2\,g}$.

\subsection{``Number''. Isometry by volume-preservation}\label{Sub:NBR} Finally, we turn to the ``number'' ingredient of the proof of \cref{Th:Main}, ``if'' implication, which improves smooth conformal isometry to smooth isometry. To this aim, we claim $\kappa$ (hence $\iota$) is in fact volume-preserving. Combined with the above  considerations, this will force $\iota$ and $\kappa$ to be smooth isometries.

Let $\varphi \colon\mms\to\R$ be bounded and continuous. Since  the function $\varphi\circ\kappa$ is also bounded and continuous, applying genericity twice entails
\begin{align*}
\int_\mms\varphi\d\vol_g &= \lambda\int_\mms \varphi\d\meas \\
&=\lambda\lim_{n\to\infty} \frac{1}{n}\big[\varphi(x_1) + \dots + \varphi(x_n)\big]\\
&=\lambda\lim_{n\to\infty} \frac1n\big[\varphi\circ\kappa(x_1') +\dots+\varphi\circ\kappa(x_n')\big]\\
&=\lambda\int_\mms\varphi\circ\kappa\d\meas'\\
&=\int_{\mms}\varphi\d\kappa_\push\vol_{g'}.
\end{align*}
The arbitrariness of $\varphi$ yields $\smash{\kappa_\push\vol_{g'} =\vol_g}$, i.e.~$\kappa$ is volume-preserving.

On the other hand, the identities from \cref{Le:Trafo} imply
\begin{align*}
\vol_g = \kappa_\push\vol_{g'} = \vol_{\iota^*g'} = \vol_{\Sigma^2\,g} = \Sigma^d\,\vol_g,
\end{align*}
where we recall $d$ denotes the dimension of $\mms$ and $\mms'$. This yields $\Sigma = 1$ everywhere on $\mms$ since $\smash{\vol_g}$ has full support and $\Sigma$ is smooth. Thus, $\kappa$ (hence $\iota$) is a measure-preserving smooth isometry. This establishes \cref{Th:Main}.

\begin{remark}[Relation of \cref{Th:Main} to Bombelli's conjecture and the Hauptvermutung from CST]\label{Re:RelationBCHV} The identity   \eqref{Eq:Hyyy} is arguably quite strong: it stipulates the same chronological orders in law \emph{without} permutations of the random variables. Bombelli's conjecture \cite{bombelli2000} suggests that  the conclusion from \cref{Th:Main}, ``if'' implication,   is true \emph{with} permutations. More precisely, given any $k\in\N$ let $\sim_k$ designate the equivalence relation on $\smash{\{0,1\}^{k\times k}}$ by conjugacy with permutation matrices. Let $\smash{\pi_k\colon\{0,1\}^{k\times k} \to \{0,1\}^{k\times k}/\sim_k}$ denote the induced projection map. The conjecture states that if $\smash{(\pi_k)_\push\nu^k = (\pi_k)_\push\nu'^k}$ for every $k\in\N$, then $(\mms,g)$ and $(\mms',g')$ are smoothly isometric; here, $\smash{\nu^k}$ is from \eqref{Eq:nun}.

There are two challenges that need to be understood before  applying our argument to Bombelli's conjecture. The first is  $\smash{\{(\pi_k)_\push\nu^k: k\in\N\}}$ does not define a projective family of probability measures. It is thus unclear how to ``pass to the limit $k\to\infty$'' in the identity $\smash{(\pi_k)_\push\nu^k = (\pi_k)_\push\nu'^k}$ in order to construct two dense (hence infinite) and chronologically isomorphic sets. The second is  genericity of a sequence of random variables is not stable under random  permutation of its elements. 

An affirmative answer to Bombelli's conjecture would confirm \eqref{Eq:ONG} ``statistically''. A necessarily stronger ``point\-wise'' form of \eqref{Eq:ONG} would probably give a strong hint towards the major open problem of CST: the rigorous formulation of its Hauptvermutung, envisioned by Bombelli--Lee--Meyer--Sorkin \cite{bombelli-lee-meyer-sorkin1987}. We refer the reader to Surya \cite{surya2019}*{§3.1} for details. First attempts --- yet without a concrete proposal --- were made by Bombelli \cite{bombelli2000}, Noldus \cite{noldus2004}, Bombelli--Noldus \cite{bombelli-noldus2004}, and Bombelli--Noldus--Tafoya  \cite{bombelli-noldus-tafoya2012+}, and more recently Müller \cite{muller2025+} and Mondino--Sämann \cite{mondino-samann2025+}.
\end{remark}

\subsection{Modifications in the weighted case} In a similar manner, we can now address the proof of the ``if'' part of \cref{Th:Weighted}. In this situation, recall we replace the normalized volume measure \eqref{Eq:NorVol}  by the normalized weighted volume measure \eqref{Eq:NorVolII}. Following the lines of  \cref{Sub:RDR} verbatim, we obtain the existence of a diffeomorphism $\iota\colon\mms\to\mms'$ and a smooth nowhere vanishing function $\Sigma\colon\mms\to\R$ such that $\smash{\iota^*g' = \Sigma^2\,g}$. As in \cref{Sub:NBR}, we see $\smash{\iota^{-1}}$ (hence $\iota$) is measure-preserving.

Combining this with \cref{Le:Trafo} then yields 
\begin{align*}
\rme^V\,\vol_g = \iota^{-1}_\push\big[\rme^{V'}\,\vol_{g'}\big] = \rme^{V'\circ \iota}\,\vol_{\iota^*g'} = \rme^{V'\circ\iota}\,\vol_{\Sigma^2\,g} = \rme^{V'\circ\iota}\,\Sigma^d\,\vol_g.
\end{align*}
As in \cref{Sub:NBR}, this yields the desired relation \eqref{Eq:Rel} and establishes \cref{Th:Weighted}.

\bibliographystyle{amsrefs}
\begin{bibdiv}
\begin{biblist}

\bib{ambrosio-gigli-savare2008}{book}{
      author={Ambrosio, Luigi},
      author={Gigli, Nicola},
      author={Savar\'{e}, Giuseppe},
       title={Gradient flows in metric spaces and in the space of probability
  measures},
     edition={Second edition},
      series={Lectures in Mathematics ETH Z\"{u}rich},
   publisher={Birkh\"{a}user Verlag, Basel},
        date={2008},
        ISBN={978-3-7643-8721-1},
      review={\MR{2401600}},
}

\bib{beem-ehrlich-easley1996}{book}{
      author={Beem, John~K.},
      author={Ehrlich, Paul~E.},
      author={Easley, Kevin~L.},
       title={Global {L}orentzian geometry},
     edition={Second},
      series={Monographs and Textbooks in Pure and Applied Mathematics},
   publisher={Marcel Dekker, Inc., New York},
        date={1996},
      volume={202},
        ISBN={0-8247-9324-2},
      review={\MR{1384756}},
}

\bib{benincasa-dowker2010}{article}{
      author={Benincasa, Dionigi M.~T.},
      author={Dowker, Fay},
       title={Scalar curvature of a causal set},
        date={2010},
        ISSN={1079-7114},
     journal={Physical Review Letters},
      volume={104},
      number={18},
         url={http://dx.doi.org/10.1103/PhysRevLett.104.181301},
}

\bib{beran-braun-calisti-gigli-mccann-ohanyan-rott-samann+-}{misc}{
      author={Beran, Tobias},
      author={Braun, Mathias},
      author={Calisti, Matteo},
      author={Gigli, Nicola},
      author={McCann, Robert~J.},
      author={Ohanyan, Argam},
      author={Rott, Felix},
      author={S\"amann, Clemens},
       title={{A} nonlinear d'{A}lembert comparison theorem and causal
  differential calculus on metric measure spacetimes},
        date={Preprint, arXiv:2408.15968},
}

\bib{bernal-sanchez2007}{article}{
      author={Bernal, Antonio~N.},
      author={S\'anchez, Miguel},
       title={Globally hyperbolic spacetimes can be defined as `causal' instead
  of `strongly causal'},
        date={2007},
        ISSN={0264-9381,1361-6382},
     journal={Classical Quantum Gravity},
      volume={24},
      number={3},
       pages={745\ndash 749},
         url={https://doi.org/10.1088/0264-9381/24/3/N01},
      review={\MR{2294243}},
}

\bib{bombelli2000}{article}{
      author={Bombelli, Luca},
       title={Statistical {L}orentzian geometry and the closeness of
  {L}orentzian manifolds},
        date={2000},
        ISSN={0022-2488,1089-7658},
     journal={J. Math. Phys.},
      volume={41},
      number={10},
       pages={6944\ndash 6958},
         url={https://doi.org/10.1063/1.1288494},
      review={\MR{1781418}},
}

\bib{bombelli-lee-meyer-sorkin1987}{article}{
      author={Bombelli, Luca},
      author={Lee, Joohan},
      author={Meyer, David},
      author={Sorkin, Rafael~D.},
       title={Space-time as a causal set},
        date={1987},
        ISSN={0031-9007},
     journal={Phys. Rev. Lett.},
      volume={59},
      number={5},
       pages={521\ndash 524},
         url={https://doi.org/10.1103/PhysRevLett.59.521},
      review={\MR{899046}},
}

\bib{bombelli-noldus2004}{article}{
      author={Bombelli, Luca},
      author={Noldus, Johan},
       title={The moduli space of isometry classes of globally hyperbolic
  spacetimes},
        date={2004},
        ISSN={0264-9381,1361-6382},
     journal={Classical Quantum Gravity},
      volume={21},
      number={18},
       pages={4429\ndash 4453},
         url={https://doi.org/10.1088/0264-9381/21/18/010},
      review={\MR{2090486}},
}

\bib{bombelli-noldus-tafoya2012+}{misc}{
      author={Bombelli, Luca},
      author={Noldus, Johan},
      author={Tafoya, Julio},
       title={Lorentzian manifolds and causal sets as partially ordered measure
  spaces},
        date={Preprint, arXiv:1212.0601},
}

\bib{braun2025}{misc}{
      author={Braun, Mathias},
       title={{N}ew perspectives on the d'{A}lembertian from general
  relativity. {A}n invitation},
        date={Indag. Math., to appear},
}

\bib{braun-samann+}{misc}{
      author={Braun, Mathias},
      author={S\"amann, Clemens},
       title={{G}romov's reconstruction theorem and measured
  {G}romov--{H}ausdorff convergence in {L}orentzian geometry},
        date={Preprint, arXiv:2506.10852},
}

\bib{brightwell-gregory1991}{article}{
      author={Brightwell, Graham},
      author={Gregory, Ruth},
       title={Structure of random discrete spacetime},
        date={1991},
        ISSN={0031-9007},
     journal={Phys. Rev. Lett.},
      volume={66},
      number={3},
       pages={260\ndash 263},
         url={https://doi.org/10.1103/PhysRevLett.66.260},
      review={\MR{1086318}},
}

\bib{bykov-minguzzi-suhr2025}{article}{
      author={Bykov, A.},
      author={Minguzzi, E.},
      author={Suhr, S.},
       title={Lorentzian metric spaces and {GH}-convergence: the unbounded
  case},
        date={2025},
        ISSN={0377-9017,1573-0530},
     journal={Lett. Math. Phys.},
      volume={115},
      number={3},
       pages={Paper No. 63},
         url={https://doi.org/10.1007/s11005-025-01941-0},
      review={\MR{4914034}},
}

\bib{cavalletti-mondino2022-review}{article}{
      author={Cavalletti, Fabio},
      author={Mondino, Andrea},
       title={A review of {L}orentzian synthetic theory of timelike {R}icci
  curvature bounds},
        date={2022},
        ISSN={0001-7701,1572-9532},
     journal={Gen. Relativity Gravitation},
      volume={54},
      number={11},
       pages={Paper No. 137, 39 pp.},
         url={https://doi.org/10.1007/s10714-022-03004-4},
      review={\MR{4504922}},
}

\bib{dowker-surya2024}{incollection}{
      author={Dowker, Fay},
      author={Surya, Sumati},
       title={The causal set approach to the problem of quantum gravity},
        date={[2024] \copyright 2024},
   booktitle={Handbook of quantum gravity},
   publisher={Springer, Singapore},
       pages={2989\ndash 3002},
         url={https://doi.org/10.1007/978-981-99-7681-2_70},
      review={\MR{4877915}},
}

\bib{dowker-garcia-surya2000}{article}{
      author={Dowker, H.~F.},
      author={Garcia, R.~S.},
      author={Surya, S.},
       title={{$K$}-causality and degenerate spacetimes},
        date={2000},
        ISSN={0264-9381,1361-6382},
     journal={Classical Quantum Gravity},
      volume={17},
      number={21},
       pages={4377\ndash 4396},
         url={https://doi.org/10.1088/0264-9381/17/21/303},
      review={\MR{1800140}},
}

\bib{eichhorn-mizera-surya2017}{article}{
      author={Eichhorn, Astrid},
      author={Mizera, Sebastian},
      author={Surya, Sumati},
       title={Echoes of asymptotic silence in causal set quantum gravity},
        date={2017},
        ISSN={0264-9381,1361-6382},
     journal={Classical Quantum Gravity},
      volume={34},
      number={16},
       pages={16LT01, 13},
         url={https://doi.org/10.1088/1361-6382/aa7d1b},
      review={\MR{3682360}},
}

\bib{finkelstein1969}{article}{
      author={Finkelstein, David},
       title={Space-time code},
        date={1969},
        ISSN={0031-899X,1536-6065},
     journal={Phys. Rev. (2)},
      volume={184},
       pages={1261\ndash 1271},
      review={\MR{253703}},
}

\bib{gierz-hofmann-keimel-lawson-mislove-scott1980}{book}{
      author={Gierz, Gerhard},
      author={Hofmann, Karl~Heinrich},
      author={Keimel, Klaus},
      author={Lawson, Jimmie~D.},
      author={Mislove, Michael~W.},
      author={Scott, Dana~S.},
       title={A compendium of continuous lattices},
   publisher={Springer-Verlag, Berlin-New York},
        date={1980},
        ISBN={3-540-10111-X},
      review={\MR{614752}},
}

\bib{gigli+}{misc}{
      author={Gigli, Nicola},
       title={{H}yperbolic {B}anach spaces {I} --- {D}irected completion of
  partial orders},
        date={Preprint, arXiv:2503.10467},
}

\bib{glaser-surya2013}{article}{
      author={Glaser, Lisa},
      author={Surya, Sumati},
       title={Towards a definition of locality in a manifoldlike causal set},
        date={2013Dec},
     journal={Phys. Rev. D},
      volume={88},
       pages={124026},
         url={https://link.aps.org/doi/10.1103/PhysRevD.88.124026},
}

\bib{gromov1981}{book}{
      author={Gromov, Mikhael},
      editor={Lafontaine, J.},
      editor={Pansu, P.},
       title={Structures m\'etriques pour les vari\'et\'es riemanniennes},
      series={Textes Math\'ematiques [Mathematical Texts]},
   publisher={CEDIC, Paris},
        date={1981},
      volume={1},
        ISBN={2-7124-0714-8},
      review={\MR{682063}},
}

\bib{hawking-ellis1973}{book}{
      author={Hawking, S.~W.},
      author={Ellis, G. F.~R.},
       title={The large scale structure of space-time},
      series={Cambridge Monographs on Mathematical Physics},
   publisher={Cambridge University Press, London-New York},
        date={1973},
      volume={No. 1},
      review={\MR{424186}},
}

\bib{hawking-king-mccarthy1976}{article}{
      author={Hawking, S.~W.},
      author={King, A.~R.},
      author={McCarthy, P.~J.},
       title={A new topology for curved space-time which incorporates the
  causal, differential, and conformal structures},
        date={1976},
        ISSN={0022-2488,1089-7658},
     journal={J. Mathematical Phys.},
      volume={17},
      number={2},
       pages={174\ndash 181},
         url={https://doi.org/10.1063/1.522874},
      review={\MR{395736}},
}

\bib{hawking-sachs1974}{article}{
      author={Hawking, S.~W.},
      author={Sachs, R.~K.},
       title={Causally continuous spacetimes},
        date={1974},
        ISSN={0010-3616,1432-0916},
     journal={Comm. Math. Phys.},
      volume={35},
       pages={287\ndash 296},
         url={http://projecteuclid.org/euclid.cmp/1103859625},
      review={\MR{334862}},
}

\bib{kechris1995}{book}{
      author={Kechris, Alexander~S.},
       title={Classical descriptive set theory},
      series={Graduate Texts in Mathematics},
   publisher={Springer-Verlag, New York},
        date={1995},
      volume={156},
        ISBN={0-387-94374-9},
         url={https://doi.org/10.1007/978-1-4612-4190-4},
      review={\MR{1321597}},
}

\bib{kronheimer-penrose1967}{article}{
      author={Kronheimer, E.~H.},
      author={Penrose, R.},
       title={On the structure of causal spaces},
        date={1967},
        ISSN={0008-1981},
     journal={Proc. Cambridge Philos. Soc.},
      volume={63},
       pages={481\ndash 501},
         url={https://doi.org/10.1017/s030500410004144x},
      review={\MR{208982}},
}

\bib{kunzinger-samann2018}{article}{
      author={Kunzinger, Michael},
      author={S\"{a}mann, Clemens},
       title={Lorentzian length spaces},
        date={2018},
        ISSN={0232-704X,1572-9060},
     journal={Ann. Global Anal. Geom.},
      volume={54},
      number={3},
       pages={399\ndash 447},
         url={https://doi.org/10.1007/s10455-018-9633-1},
      review={\MR{3867652}},
}

\bib{last-penrose2018}{book}{
      author={Last, G\"unter},
      author={Penrose, Mathew},
       title={Lectures on the {P}oisson process},
      series={Institute of Mathematical Statistics Textbooks},
   publisher={Cambridge University Press, Cambridge},
        date={2018},
      volume={7},
        ISBN={978-1-107-45843-7; 978-1-107-08801-6},
      review={\MR{3791470}},
}

\bib{malament1977}{article}{
      author={Malament, David~B.},
       title={The class of continuous timelike curves determines the topology
  of spacetime},
        date={1977},
        ISSN={0022-2488,1089-7658},
     journal={J. Mathematical Phys.},
      volume={18},
      number={7},
       pages={1399\ndash 1404},
         url={https://doi.org/10.1063/1.523436},
      review={\MR{436878}},
}

\bib{maldacena1998}{article}{
      author={Maldacena, Juan},
       title={The large {$N$} limit of superconformal field theories and
  supergravity},
        date={1998},
        ISSN={1095-0761,1095-0753},
     journal={Adv. Theor. Math. Phys.},
      volume={2},
      number={2},
       pages={231\ndash 252},
         url={https://doi.org/10.4310/ATMP.1998.v2.n2.a1},
      review={\MR{1633016}},
}

\bib{mccann+}{misc}{
      author={McCann, Robert~J.},
       title={{T}rading linearity for ellipticity: a nonsmooth approach to
  {E}instein's theory of gravity and the {L}orentzian splitting theorems},
        date={Proceedings of the Forward From the Fields Medal 2024, to
  appear},
}

\bib{mccann-samann2022}{article}{
      author={McCann, Robert~J.},
      author={S\"amann, Clemens},
       title={A {L}orentzian analog for {H}ausdorff dimension and measure},
        date={2022},
        ISSN={2578-5885,2578-5893},
     journal={Pure Appl. Anal.},
      volume={4},
      number={2},
       pages={367\ndash 400},
         url={https://doi.org/10.2140/paa.2022.4.367},
      review={\MR{4496090}},
}

\bib{minguzzi2008-ladder-i}{article}{
      author={Minguzzi, E.},
       title={The causal ladder and the strength of {$K$}-causality. {I}},
        date={2008},
        ISSN={0264-9381,1361-6382},
     journal={Classical Quantum Gravity},
      volume={25},
      number={1},
       pages={015009, 13},
         url={https://doi.org/10.1088/0264-9381/25/1/015009},
      review={\MR{2378273}},
}

\bib{minguzzi2008-weak-distinction}{article}{
      author={Minguzzi, E.},
       title={Weak distinction and the optimal definition of causal
  continuity},
        date={2008},
        ISSN={0264-9381,1361-6382},
     journal={Classical Quantum Gravity},
      volume={25},
      number={7},
       pages={075015, 7},
         url={https://doi.org/10.1088/0264-9381/25/7/075015},
      review={\MR{2404424}},
}

\bib{minguzzi-suhr2022}{article}{
      author={Minguzzi, E.},
      author={Suhr, S.},
       title={Lorentzian metric spaces and their {G}romov--{H}ausdorff
  convergence},
        date={2024},
        ISSN={0377-9017,1573-0530},
     journal={Lett. Math. Phys.},
      volume={114},
      number={3},
       pages={Paper No. 73},
         url={https://doi.org/10.1007/s11005-024-01813-z},
      review={\MR{4752400}},
}

\bib{minguzzi2019-causality}{article}{
      author={Minguzzi, Ettore},
       title={Lorentzian causality theory},
        date={2019},
        ISSN={1433-8351},
     journal={Living Reviews in Relativity},
      volume={22},
      number={3},
       pages={202 pp.},
         url={http://dx.doi.org/10.1007/s41114-019-0019-x},
}

\bib{mondino-samann2025+}{misc}{
      author={Mondino, Andrea},
      author={S\"amann, Clemens},
       title={Lorentzian {G}romov--{H}ausdorff convergence and
  pre-compactness},
        date={Preprint, arXiv:2504.10380},
}

\bib{muller2025+}{misc}{
      author={M\"uller, Olaf},
       title={On the {H}auptvermutung of causal set theory},
        date={Preprint, arXiv:2503.01719},
}

\bib{myers-steenrod1939}{article}{
      author={Myers, S.~B.},
      author={Steenrod, N.~E.},
       title={The group of isometries of a {R}iemannian manifold},
        date={1939},
        ISSN={0003-486X,1939-8980},
     journal={Ann. of Math. (2)},
      volume={40},
      number={2},
       pages={400\ndash 416},
         url={https://doi.org/10.2307/1968928},
      review={\MR{1503467}},
}

\bib{myrheim1978}{misc}{
      author={Myrheim, Jan},
       title={Statistical geometry},
        date={Technical Report CERN-TH-2538 (1978), CERN},
}

\bib{nachbin1965}{article}{
      author={Nachbin, Leopoldo},
       title={Topology and order},
        date={1965},
      volume={No. 4},
       pages={vi+122},
        note={Translated from the Portuguese by Lulu Bechtolsheim},
      review={\MR{219042}},
}

\bib{noldus2004}{article}{
      author={Noldus, Johan},
       title={A {L}orentzian {G}romov-{H}ausdorff notion of distance},
        date={2004},
        ISSN={0264-9381,1361-6382},
     journal={Classical Quantum Gravity},
      volume={21},
      number={4},
       pages={839\ndash 850},
         url={https://doi.org/10.1088/0264-9381/21/4/007},
      review={\MR{2036128}},
}

\bib{palais1957}{article}{
      author={Palais, Richard~S.},
       title={On the differentiability of isometries},
        date={1957},
        ISSN={0002-9939,1088-6826},
     journal={Proc. Amer. Math. Soc.},
      volume={8},
       pages={805\ndash 807},
         url={https://doi.org/10.2307/2033302},
      review={\MR{88000}},
}

\bib{perez2003}{article}{
      author={Perez, Alejandro},
       title={Spin foam models for quantum gravity},
        date={2003},
        ISSN={0264-9381,1361-6382},
     journal={Classical Quantum Gravity},
      volume={20},
      number={6},
       pages={R43\ndash R104},
         url={https://doi.org/10.1088/0264-9381/20/6/202},
      review={\MR{1966811}},
}

\bib{robb1936}{book}{
      author={Robb, Alfred~A.},
       title={Geometry of time and space},
   publisher={Cambridge University Press, Cambridge},
        date={2014},
        ISBN={978-1-107-63180-9},
        note={Paperback reissue of the 1936 original},
      review={\MR{3468546}},
}

\bib{rovelli1998}{article}{
      author={Rovelli, Carlo},
       title={Loop quantum gravity},
        date={1998},
        ISSN={1433-8351},
     journal={Living Rev. Relativ.},
      volume={1},
       pages={1998\ndash 1, 68},
         url={https://doi.org/10.12942/lrr-1998-1},
      review={\MR{1711502}},
}

\bib{samann2024+}{article}{
      author={S\"amann, Clemens},
       title={A brief introduction to non-regular spacetime geometry},
        date={2024},
     journal={Internationale Mathematische Nachrichten},
      volume={256},
       pages={1\ndash 17},
         url={https://www.oemg.ac.at/db/IMN},
}

\bib{sorkin-woolgar1996}{article}{
      author={Sorkin, R.~D.},
      author={Woolgar, E.},
       title={A causal order for spacetimes with {$C^0$} {L}orentzian metrics:
  proof of compactness of the space of causal curves},
        date={1996},
        ISSN={0264-9381,1361-6382},
     journal={Classical Quantum Gravity},
      volume={13},
      number={7},
       pages={1971\ndash 1993},
         url={https://doi.org/10.1088/0264-9381/13/7/023},
      review={\MR{1400951}},
}

\bib{sorkin2024}{article}{
      author={Sorkin, Rafael~D.},
       title={Geometry from order: causal sets},
        date={2006},
     journal={Einstein Online},
      volume={2},
      number={02-1007},
         url={https://www.einstein-online.info/en/spotlight/causal_sets},
}

\bib{srivastava1998}{book}{
      author={Srivastava, S.~M.},
       title={A course on {B}orel sets},
      series={Graduate Texts in Mathematics},
   publisher={Springer-Verlag, New York},
        date={1998},
      volume={180},
        ISBN={0-387-98412-7},
         url={https://doi.org/10.1007/978-3-642-85473-6},
      review={\MR{1619545}},
}

\bib{surya2019}{article}{
      author={Surya, Sumati},
       title={The causal set approach to quantum gravity},
        date={2019},
        ISSN={1433-8351},
     journal={Living Reviews in Relativity},
      volume={22},
      number={1},
         url={http://dx.doi.org/10.1007/s41114-019-0023-1},
}

\bib{tao2011}{book}{
      author={Tao, Terence},
       title={An introduction to measure theory},
      series={Graduate Studies in Mathematics},
   publisher={American Mathematical Society, Providence, RI},
        date={2011},
      volume={126},
        ISBN={978-0-8218-6919-2},
         url={https://doi.org/10.1090/gsm/126},
      review={\MR{2827917}},
}

\end{biblist}
\end{bibdiv}

\end{document}